\newcommand{\argmax}{\mathop{\rm argmax}}
\newcommand{\argmin}{\mathop{\rm argmin}}
\newcommand{\MD}{\textsf{Modularity-Density}($|\mathcal{C}|\geq 2$)}
\newcommand{\MC}{\textsf{3-Regular-Max-Cut}}
\newcommand{\AP}{\textsf{Auxiliary-Problem}}
\newcommand{\Clique}{\textsf{$(n-4)$-Regular-$k$-Clique}}
\newtheorem{theorem}     {Theorem}
\newtheorem{lemma}       {Lemma}
\theoremstyle{definition}
\newtheorem{problem}     {Problem}
\title{A study on modularity density maximization:\\ Column generation acceleration and computational complexity analysis}
\author[1]{Issey Sukeda\thanks{sukeda-issei006@g.ecc.u-tokyo.ac.jp}}
\author[1]{Atsushi Miyauchi\thanks{miyauchi@mist.i.u-tokyo.ac.jp (Corresponding author)}}
\author[1,2]{Akiko Takeda\thanks{takeda@mist.i.u-tokyo.ac.jp}}
\affil[1]{{\normalsize Graduate School of Information Science and Technology, University of Tokyo, Japan}}
\affil[2]{{\normalsize Center for Advanced Intelligence Project, RIKEN, Japan}}
\date{\today}
\begin{document}

\maketitle

\begin{abstract}
Community detection is a fundamental network-analysis primitive with a variety of applications in diverse domains. 
Although the modularity introduced by Newman and Girvan (2004) has widely been used as a quality function for community detection, it has some drawbacks. 
The modularity density introduced by Li et al. (2008) is known to be an effective alternative to the modularity, which mitigates one of the drawbacks called the resolution limit. 
A large body of work has been devoted to designing exact and heuristic methods for modularity density maximization, without any computational complexity analysis.  
In this study, we investigate modularity density maximization from both algorithmic and computational complexity aspects. 
Specifically, we first accelerate column generation for the modularity density maximization problem. 
To this end, we point out that the auxiliary problem appearing in column generation can be viewed as a dense subgraph discovery problem. 
Then we employ a well-known strategy for dense subgraph discovery, called the greedy peeling, for approximately solving the auxiliary problem. 
Moreover, we reformulate the auxiliary problem to a sequence of $0$--$1$ linear programming problems, 
enabling us to compute its optimal value more efficiently and to get more diverse columns. 
Computational experiments using a variety of real-world networks demonstrate the effectiveness of our proposed algorithm. 
Finally, we show the NP-hardness of a slight variant of the modularity density maximization problem, 
where the output partition has to have two or more clusters, as well as showing the NP-hardness of the auxiliary problem.


\end{abstract}


\section{Introduction}
Community detection, identifying community structure in networks, 
is a fundamental network-analysis primitive with a variety of applications in diverse domains~\cite{Fortunato10,Fortunato+16}. 
Roughly speaking, a community in a network is a subset of vertices densely connected internally but only sparsely connected to the rest of the network, 
which corresponds to a social group in friendship networks (trivially), 
a molecular complex in protein--protein interaction networks~\cite{Spirin+03}, 
a spam link farm in the Web graph~\cite{Gibson+05}, and so forth. 

Newman and Girvan~\cite{Newman+04} introduced a quality function for community detection, 
called the modularity. 
Let $G=(V,E)$ be an undirected graph consisting of $n=|V|$ vertices and $m=|E|$ edges. 
For a partition $\mathcal{C}$ of $V$ (i.e., $\mathcal{C}=\{C_1,\dots, C_k\}$ such that $\bigcup_{i=1}^k C_i=V$ and $C_i\cap C_j=\emptyset$ for $i\neq j$), 
the modularity is defined as 
\begin{align*}
Q(\mathcal{C})=\sum_{C\in \mathcal{C}}\left(\frac{|E(C)|}{m}-\left(\frac{\sum_{v\in C}\deg(v)}{2m}\right)^2\right), 
\end{align*}
where $E(C)$ is the set of edges in the subgraph induced by $C\subseteq V$, i.e., 
$E(C)=\{\{i,j\}\in E\mid i,j\in C\}$, and $\deg(v)$ is the degree of $v\in V$ in $G$. 
Intuitively speaking, the modularity quantifies the actual fraction of the number of edges within the clusters minus the expected value of such a fraction 
assuming that the edges are put at random under the same degree distribution. 

The modularity has become the most popular quality function for community detection; 
that is, community detection is often conducted through modularity maximization~\cite{Fortunato10}. 
Specifically, the modularity maximization problem asks to find a partition $\mathcal{C}$ of $V$ 
that maximizes the modularity $Q(\mathcal{C})$. 
Unlike most traditional clustering problems (e.g., the $k$-means problem), the modularity maximization problem 
does not specify the number of clusters in advance; it can be determined by algorithms endogenously. 
The modularity maximization problem is known to be NP-hard~\cite{Brandes+07}. 
To date, a large number of exact, approximation, and heuristic methods have been developed for the problem (see Section~\ref{sec:related}). 

However, it is also known that the modularity has some drawbacks as a quality function for community detection, 
such as the degeneracy~\cite{Good+10} and the resolution limit~\cite{Fortunato+07}. 
The former means that there are numerous solutions that have modularity values very close to the optimal value, 
causing the significant difficulty of computing an optimal solution (even in exponential time). 
The latter means that the modularity fails to detect a community even if it is a relatively large clique connected by only a few edges with the rest of the network, 
when the size of the whole network becomes large. 

To overcome this issue, Li et al.~\cite{Li+08} introduced a novel quality function, which they refer to as the modularity density. 
For a partition $\mathcal{C}$ of $V$, the modularity density is defined as 
\begin{align*}
D(\mathcal{C})=\sum_{C\in \mathcal{C}}\frac{2|E(C)|-|E(C,V\setminus C)|}{|C|}, 
\end{align*}
where $E(C,V\setminus C)$ is the set of edges connecting a vertex in $C$ and a vertex in $V\setminus C$. 
The modularity density quantifies the following value for each cluster and takes the summation: 
the average degree to the inside of the cluster minus the average degree to the outside of the cluster. 
The modularity density is known to be an effective alternative to the modularity; 
in fact, it mitigates one of the aforementioned drawbacks, the resolution limit. 
The modularity density maximization problem asks to find a partition $\mathcal{C}$ of $V$ that maximizes $D(\mathcal{C})$, 
which again does not specify the number of clusters in advance. 

A large body of work has been devoted to designing exact and heuristic methods 
for the modularity density maximization problem. 
Exact methods have been actively studied from mathematical programming point of view~\cite{Costa15,Costa+17,Sato+19}. 
Costa~\cite{Costa15} and Costa et al.~\cite{Costa+17} designed exact methods using mixed-integer-linear-programming reformulation techniques.
Sato and Izunaga~\cite{Sato+19} developed a branch-and-price method, i.e., a branch-and-bound method that incorporates a column generation algorithm for solving a linear programming relaxation on each node in the search tree. 
On the other hand, heuristic methods have been designed based on various techniques such as local search~\cite{Santiago+17_2}, spectral optimization~\cite{Botta+16}, 
simulated annealing~\cite{Zhang+10}, mathematical programming~\cite{Costa+16,deSantiago+17,Izunaga+20}, post processing~\cite{Shang+17}, and others~\cite{Santiago+17}. 
In particular, the algorithm designed by de Santiago and Lamb~\cite{deSantiago+17} is the first column generation algorithm for the modularity density maximization problem, which outputs an optimal solution if the linear programming relaxation used has an integral optimal solution (and it is often the case). 

Unlike the modularity maximization problem, there are no computational complexity results for the modularity density maximization problem; 
in particular, it is not known whether the modularity density maximization problem is NP-hard, as mentioned by Costa et al.~\cite{Costa+17}. 
Some literature (e.g., Li et al.~\cite{Li+08}) states that the problem is NP-hard, 
but there is no valid support for the statement.

\subsection{Our contribution}
In this study, we investigate modularity density maximization from both algorithmic and computational complexity aspects. 

Specifically, we first accelerate column generation for the modularity density maximization problem. 
Column generation is a well-known strategy for solving (linear programming relaxations of) combinatorial optimization problems. 
For the modularity density maximization problem, as mentioned above, de Santiago and Lamb~\cite{deSantiago+17} established a column generation framework
and later Sato and Izunaga~\cite{Sato+19} incorporated it into a branch-and-bound method to design a branch-and-price method. 

To describe our contribution, we review the mechanism of column generation in a general manner. 
First the targeted combinatorial optimization problem is formulated as an integer linear programming problem with exponential number of variables, 
and then its linear programming relaxation is obtained by relaxing the integral constraints. 
It is empirically known that such a relaxation tends to have an integral optimal solution 
(e.g., see de Santiago and Lamb~\cite{deSantiago+17} for the case of the modularity density maximization problem); 
therefore, in that case, the goal can usually be achieved by solving the relaxation. 
However, as the relaxation also has an exponential number of (continuous) variables, it is still hard to solve (even store) the relaxation. 
To overcome this issue, the dual problem of the relaxation is then considered, which has an exponential number of constraints instead. 
Column generation first solves the dual problem with only a part of constraints and obtains an optimal solution to the subproblem. 
Then it solves an optimization problem called the auxiliary problem, 
which finds a constraint in the (full) dual problem violated by the incumbent solution to the most extent (if exists). 
If there is no violated constraint found, it can be seen that the incumbent solution is an optimal solution to the (full) dual problem and we are done. 
Otherwise we add the constraint found to the subproblem, solve it to obtain an updated solution, and repeat the above procedure. 
The auxiliary problem is often computationally expensive; 
therefore, column generation algorithms usually solve the auxiliary problem just approximately using some heuristic methods, 
and solve it exactly only if the heuristic methods fail to find a violated constraint. 
In the context of the modularity density maximization problem, 
de Santiago and Lamb~\cite{deSantiago+17} devised a simple local search heuristic for solving the auxiliary problem approximately and introduced 
$0$--$1$ quadratic programming formulations for solving it exactly. 

In this study, to design an effective heuristic method for the auxiliary problem appearing in column generation for the modularity density maximization problem, 
we point out that the auxiliary problem can be seen as a dense subgraph discovery problem. 
Dense subgraph discovery, aiming to find a (single) dense component in a network, is a well-studied task in the data mining community (see Section~\ref{sec:related}). 
An important fact is that there is a well-known strategy for designing effective dense subgraph discovery algorithms, called the greedy peeling (see e.g., \cite{Boob+20,Charikar00,Chekuri+22,Kawase+18,Kortsarz+94,Kuroki+20,Miyauchi+18,Tsourakakis_15,Tsourakakis+13,Veldt+21}). 
In this strategy, a vertex with the minimum contribution (e.g., the minimum degree) in the current remaining graph is removed iteratively 
to produce a sequence of vertex subsets from $V$ to a singleton, and then the best subset among the sequence (in terms of the objective function at hand) is returned. 
For the classic dense subgraph discovery problem, called the densest subgraph problem, this strategy is known to produce a $1/2$-approximation algorithm~\cite{Charikar00,Kortsarz+94}. 
We apply the greedy peeling to the auxiliary problem in a sophisticated way. 

Moreover, to solve the auxiliary problem exactly (when the above heuristic method fails to find a violated constraint), 
we reformulate the auxiliary problem to a sequence of $0$--$1$ linear programming problems, indexed by the solution size (from $1$ to $n$), 
where one of the optimal solutions to the problems (with the highest objective function value) corresponds to an optimal solution to the auxiliary problem. 
This formulation has two advantages over the above existing $0$--$1$ quadratic programming formulations. 
First, due to the linearity of the objective functions, it can compute the optimal value to the auxiliary problem more efficiently 
(despite the increase of the number of problems to solve). 
Second, it computes not only an optimal solution to the auxiliary problem but also quasi-optimal solutions with various sizes (from $1$ to $n$), 
which often correspond to violated constraints in the (full) dual problem. 

Computational experiments using a variety of real-world networks demonstrate the effectiveness of our proposed column generation algorithm. 
In particular, our algorithm obtains an optimal solution within only two hours
to an instance with more than 100 vertices that cannot be solved by the state-of-the-art methods in 24 hours.

Finally, we show the NP-hardness of a slight variant of the modularity density maximization problem, 
where the output partition has to have two or more clusters. 
Therefore, the variant rules out the trivial clustering consisting of the single cluster containing all vertices. 
Although the NP-hardness of the variant does not lead to the NP-hardness of the original problem (i.e., an open problem mentioned by e.g, \cite{Costa+17,Costa+16,Chen+18,Costa14}), 
ours is the first result to verify the hardness of maximizing the modularity density function. 
In our proof, we construct a polynomial-time reduction from the well-known maximum cut problem (on $3$-regular graphs) 
to the above variant of the modularity density maximization problem (on regular graphs). 
Our reduction is based on an existing reduction from the maximum cut problem to the uniform sparsest cut  problem~\cite{Bonsma+12}; 
however, the difference from the existing reduction is not totally trivial. 
As a by-product of our reduction, we also show that the uniform sparsest cut problem is NP-hard even on regular graphs. 
In addition, we also prove the NP-hardness of the above auxiliary problem appearing in column generation, 
which justifies the use of effective heuristic methods for it.

\subsection{Paper organization}
The rest of the paper is organized as follows: 
In Section~\ref{sec:related}, we review some related work. 
In Section~\ref{sec:method}, we describe a general column generation framework for the modularity density maximization problem. 
In Section~\ref{sec:proposed}, we present our techniques for accelerating column generation. 
We conduct the computational experiments and evaluate the performance of our proposed algorithm in Section~\ref{sec:experiments}. 
We show the NP-hardness of the variant of the modularity density maximization problem and the auxiliary problem in Section~\ref{sec:complexity}. 
We conclude the study and suggest future directions in Section~\ref{sec:conclusion}.

\section{Related Work}\label{sec:related}

The modularity density is known to be an effective alternative of the original modularity. 
de Santiago and Lamb~\cite{deSantiago+20} have recently conducted a systematic comparison between the modularity and the modularity density, 
using exact solutions for both quality functions. 
The modularity density itself also has many variations. 
For example, Chen et al.~\cite{Chen+18} pointed out that the modularity density still has the (milder) resolution limit, 
and developed a variant for further mitigating it. 
Chen et al.~\cite{Chen+14} presented a variant suitable for overlapping community detection, 
where each vertex may belong to more than one cluster. 
Yan et al.~\cite{Yan+20} introduced a variant applicable to community detection in signed networks~\cite{Yan+20}, 
where each edge has a positive/negative label. 


For the modularity maximization problem, Aloise et al.~\cite{Aloise+10} developed a column generation algorithm. 
They used an effective heuristic method for the targeted auxiliary problem, based on a variable neighborhood search, 
and succeeded in solving an instance with more than 500 vertices in reasonable time. 
The design of exact methods for the modularity maximization problem is based on a simple integer linear programming formulation~\cite{Agarwal+08,Dinh+15_exact}. 
This can be done due to the fact that the modularity can be rewritten as a linear form using 0--1 variables over the pairs of vertices. 
Such a property cannot be observed for the modularity density, which makes it difficult to design such a simple exact algorithm for the modularity density maximization problem. 

There are several approximation algorithms for the modularity maximization problem. 
Dinh et al.~\cite{Dinh+15} developed a semidefinite-programming-based approximation algorithm,
and later Kawase et al.~\cite{Kawase+21} improved the algorithm and analysis. 
Heuristic methods for the modularity maximization problem are based on various techniques 
such as greedy procedure~\cite{Blondel+08,Clauset+04}, spectral optimization~\cite{Newman06,Newman06_2,Richardson+09}, simulated annealing~\cite{Guimera+05,Massen+05}, 
and mathematical programming~\cite{Agarwal+08,Cafieri+14,Cafieri+11,Miyauchi+13}, 
some of which have inspired the design of some heuristics for the modularity density maximization problem. 
The most popular heuristic method is the so-called Louvain method~\cite{Blondel+08}, which works well in practice in terms of both solution quality and computation time. 

Dense subgraph discovery is a well-known graph-mining task with a variety of applications~\cite{Gionis+15}. 
The most well-studied optimization problem for dense subgraph discovery would be the densest subgraph problem. 
For a vertex subset $S\subseteq V$, the degree density (or simply density) is defined as $d(S)=\frac{|E(S)|}{|S|}$, 
which is equal to half the average degree of the subgraph induced by $S$. 
The densest subgraph problem asks to find $S\subseteq V$ that maximizes $d(S)$ without any size constraint. 
Unlike most traditional optimization problem for dense subgraph discovery such as the maximum clique problem, 
the densest subgraph problem can be solved exactly in polynomial time; 
indeed, there are exact algorithms based on a maximum-flow-based technique~\cite{Goldberg84} or a linear-programming-based technique~\cite{Charikar00}. 
As mentioned in the introduction, the greedy peeling strategy produces a $1/2$-approximation algorithm for the problem~\cite{Charikar00,Kortsarz+94}, 
which can be implemented to run in linear time, using some sophisticated data structure. 

The densest subgraph problem has many other problem variations. 
The most well-studied variants are size-restricted ones, e.g., the densest $k$-subgraph problem, 
where given $G=(V,E)$ and positive integer $k$, we are asked to find $S\subseteq V$ that maximizes $d(S)$ (or simply $|E(S)|$) 
under the size constraint $|S|=k$. 
It is known that such a restriction makes the problem harder to solve; in fact, the densest $k$-subgraph problem is NP-hard (the maximum clique problem is trivially reducible). 
A variant recently introduced by Miyauchi and Kakimura~\cite{Miyauchi+18} is most related to this study. 
Their variant asks to find $S\subseteq V$ that maximizes $f_\alpha(S)=\frac{|E(S)|-\alpha |E(S,V\setminus S)|}{|S|}$, 
where $\alpha \in [0,\infty)$ is a constant. 
Thus, setting $\alpha = 1/2$, we see that the optimization problem becomes the one-term variant of the modularity density maximization problem, 
which has a strong connection to the auxiliary problem appearing in column generation (see Section~\ref{sec:method}). 
For the above problem with general $\alpha \in [0,\infty)$, the authors developed polynomial-time exact and approximation algorithms, 
based on those for the original densest subgraph problem. 
The acceleration techniques developed in our column generation algorithm are inspired by their algorithm. 


\section{Column Generation Framework}\label{sec:method}
In this section, we review a general column generation framework for the modularity density maximization problem. 
The following framework is due to de Santiago and Lamb~\cite{deSantiago+17}, although there is a slight difference in the notations used. 

Let $G=(V,E)$ be an undirected graph consisting of $n=|V|$ vertices and $m=|E|$ edges. 
Let $\mathcal{C}$ be a partition of $V$. 
We define the contribution of $C\in \mathcal{C}$ as 
\begin{align*}
c(C)=\frac{2|E(C)|-|E(C,V\setminus C)|}{|C|}, 
\end{align*}
which is nothing but the term of $C\in \mathcal{C}$ in the modularity density $D(\mathcal{C})$. 
Therefore, we can write $D(\mathcal{C})=\sum_{C\in \mathcal{C}}c(C)$. 
In the column generation framework, the modularity density maximization problem is formulated as a 0--1 linear programming problem with exponential number of variables, 
by viewing the problem as the set partitioning problem. 
Let $\mathcal{S}=2^V\setminus \emptyset$ be the set of all (nonempty) subsets of $V$. 
For $v\in V$ and $S\in \mathcal{S}$, let $a_{vS}$ be a constant equal to 1 if $v\in S$ and 0 otherwise. 
Let $z_S$ be a 0--1 variable that takes 1 if $S\subseteq V$ is used in the solution and 0 otherwise. 
Then the modularity density maximization problem can be formulated as follows: 
\begin{alignat*}{3}
&\text{maximize} &\quad  &\sum_{S\in \mathcal{S}} c(S) z_S\\
&\text{subject to} & &\sum_{S \in \mathcal{S}} a_{vS}z_S = 1 &\quad & (v \in V),\\
&                  & &z_S\in \{0,1\} && (S \in \mathcal{S}). 
\end{alignat*}
The first constraint stipulates that every vertex belongs to exactly one cluster. 
This formulation has an exponential number of 0--1 variables. 
By relaxing the binary constraint, we can obtain the following linear programming relaxation, which we refer to as the master problem (MP): 
\begin{alignat*}{3}
&\text{maximize} &\quad  &\sum_{S\in \mathcal{S}} c(S) z_S\\
&\text{subject to} & &\sum_{S \in \mathcal{S}} a_{vS}z_S = 1,  &\quad & (v \in V),\\
&                  & &z_S\geq 0 && (S \in \mathcal{S}). 
\end{alignat*}
Note that in this relaxation, the constraint $z_S \leq 1$ ($S\in \mathcal{S}$) is omitted 
because it is automatically satisfied due to the first constraint together with the definition of $a_{vS}$. 
In general, the MP does not necessarily admit an integral solution, 
but according to the computational experiments of de Santiago and Lamb \cite{deSantiago+17}, 
the optimal solution tends to have the integrality for most instances. 
Therefore, we aim to solve the MP instead of the original problem. 
However, the MP in turn has an exponential number of (continuous) variables; thus, it is still hard to solve (even store) the problem. 

To overcome this issue, the dual problem (DP) of the MP is then considered: 
\begin{alignat*}{3}
&\text{minimize} &\quad &\sum_{v \in V} \lambda_v\\
&\text{subject to} &    &\sum_{v \in V} a_{vS}\lambda_v \geq c(S) &\quad &(S \in \mathcal{S}),\\
&                  &    &\lambda_v \in \mathbb{R} & &(v \in V). 
\end{alignat*}
Notice that the DP has an exponential number of constraints rather than variables. 
Column generation first solves the DP with only a part of constraints. 
Specifically, taking a tiny subset $\mathcal{S}'\subseteq \mathcal{S}$, we solve the following relaxed dual problem (RDP): 
\begin{alignat*}{3}
&\text{minimize} &\quad &\sum_{v \in V} \lambda_v\\
&\text{subject to} &    &\sum_{v \in V} a_{vS}\lambda_v \geq c(S) &\quad &(S \in \mathcal{S}'\subseteq \mathcal{S}),\\
&                  &    &\lambda_v \in \mathbb{R} & &(v \in V), 
\end{alignat*}
and obtain its optimal solution $\bm{\lambda^*}=(\lambda^*_v)_{v\in V}$. 
Then we solve an optimization problem called the auxiliary problem (AP), which can be formulated as follows: 
\begin{alignat*}{3}
&\text{minimize}&\quad &\sum_{v \in V} a_{vS} \lambda^*_v - c(S)\\
&\text{subject to}&    &S \in \mathcal{S}. 
\end{alignat*}
We see that the AP is an optimization problem that finds a constraint in the DP violated by the incumbent optimal solution $\bm{\lambda^*}$ to the most extent (if exists). 
Indeed, if there exists a solution $S\in \mathcal{S}$ with the objective function value less than zero, then the constraint in the DP corresponding to $S$ is violated by $\bm{\lambda^*}$. 
Otherwise we see that $\bm{\lambda^*}$ satisfies all constraints in the DP and it is an optimal solution to the DP. 
However, the AP is a computationally expensive problem; 
in fact, in Section~\ref{sec:complexity}, we prove that the AP is NP-hard even on $(n-4)$-regular graphs (when regarding $\bm{\lambda^*}$ as a part of the input). 
Therefore, the existing column generation algorithm by de Santiago and Lamb~\cite{deSantiago+17} employs a simple local search heuristic for solving the AP approximately. 
Their algorithm solves the AP exactly using $0$--$1$ quadratic programming formulations only if the heuristic method fails to find a violated constraint.

\section{The Proposed Method}\label{sec:proposed}
In this section, we present our proposed techniques for accelerating column generation for the modularity density maximization problem.

\subsection{Greedy peeling algorithm for approximately solving the auxiliary problem}

Recalling that for $v\in V$ and $S\in \mathcal{S}$, $a_{vS}=1$ if and only if $v\in S$ and multiplying the objective function of the AP by $-1$, 
we can rewrite the AP as follows:
\begin{alignat*}{3}
&\text{maximize}&\quad &c(S) - \sum_{v \in S} \lambda^*_v \\
&\text{subject to}&    &S \in \mathcal{S}.
\end{alignat*}

Here we wish to point out that the AP can be seen as a dense subgraph discovery problem. 
The most well-studied optimization problem for dense subgraph discovery would be the densest subgraph problem, 
which asks to find $S\subseteq V$ that maximizes the density $d(S)=\frac{|E(S)|}{|S|}$. 
As mentioned in Section~\ref{sec:related}, Miyauchi and Kakimura~\cite{Miyauchi+18} introduced a generalization of the density: 
$f_\alpha(S)=\frac{|E(S)|-\alpha |E(S,V\setminus S)|}{|S|}$, where $\alpha \in [0,\infty)$ is a constant, and studied the optimization problem that asks to maximize $f_{\alpha}(S)$. 
As the first term of the objective function of the AP, i.e., $c(S)$, is nothing but $2\cdot f_{1/2}(S)$, 
the AP can be seen as a variant of the optimization problem studied by Miyauchi and Kakimura~\cite{Miyauchi+18}, 
where each vertex $v\in V$ has a weight $\lambda^*_v\in \mathbb{R}$ that is subtracted if $v$ is chosen as an element of $S$. 

The greedy peeling is a well-known strategy for designing effective algorithms for dense subgraph discovery. 
In this strategy, a vertex with the minimum contribution (e.g., the minimum degree) in the current remaining graph is removed iteratively 
to produce a sequence of vertex subsets from $V$ to a singleton, 
and then the best subset among the sequence (in terms of the objective function at hand) is returned. 
For $S\subseteq V$ and $v\in S$, let $\deg_S(v)$ be the degree of $v$ in the induced subgraph $G[S]=(S,E(S))$. 
Then for the densest subgraph problem, the algorithm based on the greedy peeling strategy can be written as Algorithm~\ref{alg:peeling_DSP}. 
\begin{algorithm}[t]
    \caption{Greedy peeling algorithm for the densest subgraph problem}\label{alg:peeling_DSP}
    \SetKwInOut{Input}{Input} 
    \SetKwInOut{Output}{Output} 
	\Input{\ $G = (V,E)$}
	\Output{\ $S\subseteq V$}
    $S_n\leftarrow V$, $i \leftarrow n$\; 
    \While{$i > 1$}{
    	$v_\text{min} \leftarrow \argmin\{\deg_{S_i}(v)\mid v\in S_i\}$\; 
    	$S_{i-1} \leftarrow S_i \setminus \{v_\text{min}\}$, $i\leftarrow i-1$\;
	}
    $S_{\mathrm{max}} \leftarrow  \argmax\{d(S)\mid S\in \{S_1,\dots, S_n\}\}$\;
    \Return $S_{\mathrm{max}}$\;
\end{algorithm}
It is known that Algorithm~\ref{alg:peeling_DSP} is a $1/2$-approximation algorithm for the densest subgraph problem~\cite{Charikar00,Kortsarz+94}. 
Miyauchi and Kakimura~\cite{Miyauchi+18} employed the greedy peeling for their generalized problem 
and demonstrated that the algorithm performs well in practice.  
As the AP is a variant of their optimization problem, it would be reasonable to apply the greedy peeling to the AP. 

From now on, we describe our greedy peeling algorithm for the AP. 
Let $g$ be the objective function of the AP to be maximized, i.e.,
\begin{align*}
g(S)=c(S)-\sum_{v\in S}\lambda^*_v \quad \text{ for }S\subseteq V.
\end{align*}
When solving the AP approximately, we wish to obtain not only a single solution 
but also diverse solutions corresponding to various constraints violated by the incumbent optimal solution $\bm{\lambda^*}$ to the RDP. 
To this end, we introduce a more general objective function $g_p$ ($p\in [0,1]$) by taking a convex combination of the first and second terms of $g$, i.e., for $S\subseteq V$, 
\begin{align}\label{eq:objective}
g_p(S)&=p\cdot c(S)+(1-p)\left(-\sum_{v\in S}\lambda^*_v\right)
=\frac{p(2|E(S)|-|E(S,V\setminus S)|)-(1-p)|S|\sum_{v\in S}\lambda^*_v}{|S|}. 
\end{align}
If we take a large value of $p$, the function emphasizes the first term $c(S)$, 
whereas if we take a small value of $p$, the function emphasizes the second term $-\sum_{v\in S}\lambda^*_v$. 
It is worth mentioning that considering a general objective function like the above might lead to a better solution even in terms of the original objective function $g$, 
because the greedy peeling algorithm is not an exact algorithm (even for the case of the densest subgraph problem). 

The greedy peeling strategy starts with the whole graph and iteratively removes a vertex with the minimum contribution in the current remaining graph. 
Our greedy peeling algorithm introduces two measures for computing contribution of vertices. 
Let $S\subseteq V$. When removing a vertex from $S$, the change of the denominator of \eqref{eq:objective} does not depend on the choice of a vertex. 
Therefore, it suffices to take into account only the numerator of \eqref{eq:objective}. 
Our first measure comes from the following transformation: 
\begin{align*}
\text{The numerator of } \eqref{eq:objective}=\sum_{v\in S}\left(p\left(\deg_S(v)-\deg_{V\setminus (S\setminus \{v\})}(v)\right)-(1-p)|S|\lambda^*_v\right). 
\end{align*}
Note that $\deg_{V\setminus (S\setminus \{v\})}(v)$ represents the degree of $v\in S$ to the outside of $G[S]$. 
It would be reasonable to regard the term of $v\in S$ as a contribution of $v$, leading to our first measure: 
\begin{align*}
\textsf{cont}^\textsf{sum}_p(v) = p\left(\deg_S(v)-\deg_{V\setminus (S\setminus \{v\})}(v)\right)-(1-p)|S|\lambda^*_v \quad \text{ for }v\in S. 
\end{align*}
Our second measure focuses on the actual difference between the objective function values before and after removing a vertex. 
For any $u\in S$, when we remove $u$ from $S$, the numerator of \eqref{eq:objective} decreases by 
\begin{align*}
&p(2|E(S)|-|E(S,V\setminus S)|)-(1-p)|S|\sum_{v\in S}\lambda^*_v\\
&\quad - \left(p(2|E(S\setminus \{u\})|-|E(S\setminus \{u\},V\setminus (S\setminus \{u\}))|)-(1-p)|S\setminus \{u\}|\sum_{v\in S\setminus \{u\}}\lambda^*_v\right)\\
&=p\left(3\deg_S(u)-\deg_{V\setminus(S\setminus \{u\})}(u)\right)-(1-p)\left((|S|-1)\lambda^*_u+\sum_{v\in S}\lambda^*_v\right). 
\end{align*}
Noticing that the last term $\sum_{v\in S} \lambda^*_v$ is independent of the choice of $u\in S$, we can obtain our second measure: 
\begin{align*}
\textsf{cont}^\textsf{diff}_p(v)= p\left(3\deg_S(v)-\deg_{V\setminus(S\setminus \{v\})}(v)\right)-(1-p)(|S|-1)\lambda^*_v \quad \text{ for }v\in S. 
\end{align*}
To increase the diversity of solutions to be obtained, we introduce a more general measure by again taking a convex combination of the above two measures: 
\begin{align*}
\textsf{cont}_{p,q}(v)=q\cdot \textsf{cont}^\textsf{sum}_p(v) + (1-q)\cdot \textsf{cont}^\textsf{diff}_p(v) \quad \text{ for }v\in S. 
\end{align*}

Our greedy peeling algorithm runs as follows: 
Let $P$ and $Q$ be the sets of candidates for the above parameters $p$ and $q$, respectively. 
For each $p\in P$ and each $q\in Q$, the algorithm starts with the whole graph 
and iteratively removes a vertex with the minimum contribution $\textsf{cont}_{p,q}(v)$ in the current remaining graph to obtain a sequence of vertex subsets from $V$ to a singleton. 
Unlike the original greedy peeling strategy, which focuses only on the best subset among the sequence, 
we store all subsets $S\subseteq V$ among the sequence that satisfy $g(S)>0$ 
to find as many constraints as possible in the DP violated by $\bm{\lambda^*}$. 
For reference, the procedure is described in Algorithm~\ref{alg:peeling_ours}. 

\begin{algorithm}[tb]
    \caption{$\textsf{Peeling}(G,\bm{\lambda^*})$}
    \label{alg:peeling_ours}
    \SetKwInOut{Input}{Input} 
    \SetKwInOut{Output}{Output} 
	\Input{\ $G = (V,E)$, $\bm{\lambda^*}=(\lambda^*_v)_{v\in V}$}
	\Output{\ A family $\mathcal{S}$ of vertex subsets}
	$\mathcal{S}\leftarrow \emptyset$\;
	\tcp{The sets $P$ and $Q$ of candidates for the parameters $p$ and $q$, respectively, are fixed in advance.}
	\For{each $p\in P$}{
		\For{each $q\in Q$}{
			$S_n\leftarrow V$, $i \leftarrow n$\; 
    		\While{$i > 1$}{
				\If{$g(S_i)>0$}{$\mathcal{S}\leftarrow \mathcal{S}\cup \{S_i\}$\;}
    			$v_\text{min} \leftarrow \argmin\{\textsf{cont}_{p,q}(v)\mid v\in S_i\}$\; 
    			$S_{i-1} \leftarrow S_i \setminus \{v_\text{min}\}$, $i\leftarrow i-1$\;
			}
		}
	}
    \Return $\mathcal{S}$\;
\end{algorithm}

It is easy to see that Algorithm~\ref{alg:peeling_ours} runs in $O(|P||Q|n^2)$ time, that is, 
each iteration for a pair of $p$ and $q$, corresponding to a usual greedy peeling algorithm, consumes $O(n^2)$ time. 
On the other hand, the most similar algorithm to ours, i.e., the greedy peeling algorithm by Miyauchi and Kakimura~\cite{Miyauchi+18}, 
runs in $O(m+n\log n)$ time. 
This difference is due to the change of the definition of contribution. 
In the algorithm by Miyauchi and Kakimura~\cite{Miyauchi+18}, 
when a vertex is removed, the contribution values are changed only for its neighbors. 
Therefore, using the Fibonacci heap, we can obtain the running time of $O(m+n\log n)$. 
On the other hand, in our algorithm, the contribution values are changed for all vertices, 
due to the term of the current subset size $|S|$, which makes it difficult to reduce the running time of $O(n^2)$.

\subsection{Reformulation of the auxiliary problem}

Here we provide a new reformulation of the AP. 
We deal with the original minimization version of the AP: 
\begin{alignat*}{3}
&\text{minimize}&\quad &\sum_{v \in V} a_{vS} \lambda^*_v - c(S)\\
&\text{subject to}&    &S \in \mathcal{S}.
\end{alignat*}
Note that for any $S\subseteq V$, we have 
\begin{align*}
c(S)=\frac{2|E(S)|-|E(S,V\setminus S)|}{|S|}=\frac{4|E(S)|-\sum_{v\in S}\deg(v)}{|S|}. 
\end{align*}
For each $e\in E$, let $x_e$ be a $0$--$1$ variable that takes $1$ if both of the endpoints of $e$ are included in the solution and 0 otherwise. 
For each $v\in V$, let $y_v$ be a $0$--$1$ variable that takes $1$ if $v$ is included in the solution and $0$ otherwise. 
Then the AP can be formulated as the following $0$--$1$ fractional programming problem: 
\begin{alignat*}{3}
&\text{minimize}  &\ \ &\sum_{v\in V}\lambda_v^* y_v - \frac{\sum_{e \in E} 4x_e - \sum_{v \in V} \deg(v)y_v}{\sum_{v \in V} y_v}\\
&\text{subject to}&    &x_e \leq y_u,\ x_e \leq y_v  &  &(e = \{u,v\} \in E),\\
&                 &    &x_e \in \{0,1\} &&(e \in E), \\
&                 &    &y_v \in \{0,1\} &&(v \in V). 
\end{alignat*}
The fractionality appears in the second term of the objective function, 
corresponding to the contribution of the solution in terms of the modularity density. 
To remove this, we consider fixing the size of solutions $\sum_{v\in V} y_v$ to each $k=1,2,\dots, n$. 
Specifically, for each $k = 1, 2,\dots , n$, we introduce the following $0$--$1$ linear programming problem ($\text{AP}(k)$): 
\begin{alignat*}{3}
&\text{minimize}  &\ \ &\sum_{v\in V} \lambda_v^*y_v - \frac{\sum_{e \in E} 4x_e - \sum_{v\in V} \deg(v)y_v}{k}\\
&\text{subject to}&    &\sum_{v\in V}y_v = k, \\
&                 &    &x_e \leq y_u,\ x_e \leq y_v  &  &(e = \{u,v\} \in E),\\
&                 &    &x_e \in \{0,1\} &&(e \in E), \\
&                 &    &y_v \in \{0,1\} &&(v \in V). 
\end{alignat*}
An optimal solution to $\text{AP}(k)$ is an optimal solution to the AP with the size constraint $|S|=k$. 
Therefore, the best solution among the optimal solutions to $\text{AP}(k)$ ($k=1,2,\dots, n$) is optimal to the AP, 
and its objective value is equal to the optimal value of the AP. 
Owing to this reformulation, the AP becomes a sequence of $0$--$1$ linear programming problems. 

Our reformulation has the following advantages: 
It is expected that an optimal solution to the AP can be found more quickly 
than the $0$--$1$ quadratic programming formulations by de Santiago and Lamb~\cite{deSantiago+17}. 
In fact, $\text{AP}(k)$ is no longer nonlinear and hence easy to handle using MIP solvers, 
despite of the increase of the number of problems from one to $n$. 
Moreover, the reformulation is useful also in terms of finding diverse constraints in the DP violated by $\bm{\lambda^*}$. 
Indeed, in the process of solving the reformulation, we can obtain optimal solutions to $\text{AP}(k)$ ($k=1,2,\dots, n$), 
which can be seen as quasi-optimal solutions to the AP, perhaps corresponding to some violated constraints in the DP. 

Our algorithm for solving the AP exactly using the reformulation is described in Algorithm~\ref{alg:exact_ours}. 
In general, MIP solvers employ a branch-and-bound method to find an optimal solution, 
where many feasible solutions (with good objective function values) are found. 
To obtain diverse constraints, we collect all vertex subsets $S\subseteq V$ 
that appeared as feasible solutions for $\text{AP}(k)$ ($k=1,2,\dots, n$) with negative objective function values. 

\begin{algorithm}[tb]
    \caption{$\textsf{Exact}(G,\bm{\lambda^*})$}
    \label{alg:exact_ours}
    \SetKwInOut{Input}{Input} 
    \SetKwInOut{Output}{Output} 
	\Input{\ $G = (V,E)$, $\bm{\lambda^*}=(\lambda^*_v)_{v\in V}$}
	\Output{\ A family $\mathcal{S}$ of vertex subsets}
	$\mathcal{S}\leftarrow \emptyset$\;
	\For{each $k=1,2,\dots, n$}{
		Solve $\text{AP}(k)$ and let $\mathcal{S}_k$ be the family of vertex subsets that appear as feasible solutions for $\text{AP}(k)$ with negative objective function values\;
		$\mathcal{S}\leftarrow \mathcal{S}\cup \mathcal{S}_k$\;
	}
	\Return $\mathcal{S}$\;
\end{algorithm}

\subsection{Summary: Column generation algorithm using our proposed techniques}

We are now ready to present our column generation algorithm for the modularity density maximization problem. 
To make the RDP have an optimal solution (i.e., bounded), we have to set an initial $\mathcal{S}'$ in the RDP appropriately. 
One trivial way is to use the singletons, i.e., $\mathcal{S}'=\{\{v\}\mid v\in V\}$. 
Note that the output of any algorithm for the modularity density maximization problem can be used as an initial $\mathcal{S}'$. 
Our algorithm first solves the RDP with $\mathcal{S}'$ and obtains its optimal solution $\bm{\lambda^*}$. 
Then it performs the greedy peeling procedure (Algorithm~\ref{alg:peeling_ours}). 
If new violated constraints are found, they are added to the RDP and the process is repeated from the beginning; 
otherwise the algorithm conducts the exact procedure (Algorithm~\ref{alg:exact_ours}). 
For reference, the entire procedure is summarized in Algorithm~\ref{alg:CG_ours}. 

\begin{algorithm}[t]
    \caption{Our column generation algorithm}
    \label{alg:CG_ours}
    \SetKwInOut{Input}{Input} 
    \SetKwInOut{Output}{Output} 
	\SetKw{Continue}{continue}
	\Input{\ $G = (V,E)$}
	\Output{\ An optimal solution $\bm{\lambda^*}=(\lambda^*_v)_{v\in V}$ to the DP}
    $\mathcal{S}'\leftarrow \{\{v\}\mid v\in V\}$\; \tcp{Initialization of the set of constraints in the RDP.}
	\While{\texttt{TRUE}}{	
    	Solve the RDP with $\mathcal{S}'$ and obtain its optimal solution $\bm{\lambda^*}=(\lambda^*_v)_{v\in V}$\;
		$\mathcal{S}_\text{peeling} \leftarrow \textsf{Peeling}(G,\bm{\lambda^*})$\;
		\If{$\mathcal{S}_\mathrm{peeling}=\emptyset$}{
			$\mathcal{S}_\text{exact}=\textsf{Exact}(G,\bm{\lambda}^*)$\;
			\If{$\mathcal{S}_\mathrm{exact}=\emptyset$}{
				\Return $\bm{\lambda}^*$\;
			}
			$\mathcal{S'}\leftarrow \mathcal{S'}\cup \mathcal{S}_\text{exact}$\;
			\Continue\;
		}
		$\mathcal{S'}\leftarrow \mathcal{S'}\cup \mathcal{S}_\text{peeling}$\;
	}
\end{algorithm}

%

\section{Experimental Evaluation}\label{sec:experiments}

In this section, we conduct computational experiments using a variety of real-world graphs and evaluate the performance of our proposed method.

\subsection{Experimental setup}

Here we explain our experimental setup.

\subsubsection{Instances}

\begin{table}[tb]
\centering
\caption{Real-world graphs used in our experiments.}\label{tab:instances}
\scalebox{0.9}{
\begin{tabular}{clrrrc}
\toprule
ID& Name & $n$ & $m$ &Best $D(\mathcal{C})$ & Description\\
\midrule
1&\texttt{Strike}&24&38&$8.86111^{*}$~\cite{Costa15} & Social network~\cite{Michael97}\\
2&\texttt{Karate}&34&78&$7.8451^{*}$~\cite{Costa15}& Social network~\cite{Zachary77}\\
3&\texttt{Dolphins}&62&159&$12.1252^{*}$~\cite{deSantiago+17}& Animal social network~\cite{Lusseau+03}\\
4&\texttt{Les Mis\'erables}&77&254&$24.5474^{*}$~\cite{Sato+19}& Co-appearance network~\cite{Lusseau+03}\\
5&\texttt{Polbooks}&105&441&$21.9652^{*}$~\cite{deSantiago+17}& Co-purchased network~\cite{Krebs}\\
6&\texttt{Adjnoun}&112&425&7.651~\cite{Santiago+17}& Word adjacency network~\cite{Newman06_2}\\
7&\texttt{Football}&115&613&44.340~\cite{Santiago+17}& Sports game network~\cite{Girvan+02}\\
8&\texttt{Jazz}&198&2,742&49.716~\cite{Santiago+17} & Social network~\cite{Gleiser+03}\\
\bottomrule
\end{tabular}
}
\end{table}

Table~\ref{tab:instances} summarizes the instances on which our experiments were conducted.
The first and second columns indicate IDs and names, respectively,  of the instances. 
The third and fourth columns present the number of vertices and the number of edges, respectively. 
The fifth column reports the best previously-known modularity density value for the instances, where the value is associated with ``$*$'' if it is known to be the optimal value. 
The last column presents the description of the instances.

\subsubsection{Baseline methods}

We employ powerful existing algorithms designed by de Santiago and Lamb~\cite{deSantiago+17} and Sato and Izunaga~\cite{Sato+19}, as baseline methods.

The algorithms by de Santiago and Lamb~\cite{deSantiago+17} are the first series of column generation algorithms 
for the modularity density maximization problem. 
The design of the algorithms follows the column generation framework reviewed in Section~\ref{sec:method}, 
but there are some additional techniques for accelerating the computation. 
Among a variety of algorithms they developed, the algorithm called CGII+ILS is found to be the best algorithm. 
CGII+ILS solves the AP using the formulation called AP-II. 
AP-II is a 0--1 convex quadratic programming formulation derived by linearizing the nonconvex terms of the previous formulation called AP-I, 
which is a 0--1 nonconvex quadratic programming formulation produced by multiplying the objective function of the AP by the size term $\sum_{v\in V}y_v$. 
It should be noted that the transformation from the AP to AP-I (and AP-II) does change the ordering of solutions in terms of the objective function value; 
however, it does not change the sign of the objective function values of solutions 
and thus AP-I and AP-II can be used for deciding if the incumbent solution $\bm{\lambda^*}$ of the RDP is feasible for the DP. 
The term ILS of CGII+ILS means that the algorithm employs a heuristic method called the iterative local search for approximately solving the AP. 
The method starts with a randomly chosen $S\subseteq V$. 
Then it repeats the following process for appropriate number of times: 
Along with a random ordering of vertices produced, 
it moves a vertex from or to the cluster if the movement improves the objective function value, 
and then perturbs the cluster if such a movement does not happen. 
To construct an initial $\mathcal{S}'$ in the RDP, CGII+ILS uses the method called the hybrid local search, 
which can be seen as a modularity density counterpart of the Louvain method~\cite{Blondel+08} 
followed by an additional local search procedure. 

The algorithm by Sato and Izunaga~\cite{Sato+19} is the state-of-the-art exact algorithm for the modularity density maximization problem. 
The algorithm is based on a branch-and-price method, 
i.e., a branch-and-bound method that incorporates a column generation algorithm for solving a linear programming relaxation for each node in the search tree. 
The authors incorporated two existing techniques into the method, called the set-packing relaxation~\cite{Sato+12} and the multiple-cutting-plane-at-a-time~\cite{Izunaga+17}. 
As an initial $\mathcal{S}'$ in the RDP, the authors just employed the singletons, i.e., $\mathcal{S}'=\{\{v\}\mid v\in V\}$. 
An important fact is that unlike the algorithms by de Santiago and Lamb~\cite{deSantiago+17} and even our proposed algorithm, 
the algorithm by Sato and Izunaga~\cite{Sato+19} is guaranteed to output an optimal solution 
to the modularity density maximization problem (rather than its relaxation) at the termination. 
Therefore, comparing the performance of those two types of algorithms seems unfair. 
However, the algorithm by Sato and Izunaga~\cite{Sato+19} can be a baseline in the sense that our proposed algorithm needs to have greater scalability than their algorithm.

\subsubsection{Implementation and machine}
Our proposed algorithm was implemented in Python 3. 
Throughout the experiments, we set $P=\{0.0,0.1,\dots, 1.0\}$ and $Q=\{0.0, 0.5, 1.0\}$ in the greedy peeling (Algorithm~\ref{alg:peeling_ours}). 
The algorithm CGII+ILS by de Santiago and Lamb~\cite{deSantiago+17} was also implemented in Python 3; 
however, to perform a fair comparison, the algorithm was modified so that an initial $\mathcal{S}'$ of the RDP is set to be the singletons, 
consistent with our proposed algorithm and the algorithm by Sato and Izunaga~\cite{Sato+19}. 
Note that the choice of initial $\mathcal{S}'$ is critical to the performance of the algorithms; 
indeed, if we set it to an optimal solution to the modularity density maximization problem, the algorithms are expected to terminate shortly. 
The purpose of our experiments is to evaluate the performance of the algorithms in terms of exactly or approximately solving the AP, 
irrespective of the choice of initial $\mathcal{S}'$. 
The parameter called ``factor'' in CGII+ILS, which controls the degree of cluster perturbation, 
was set to 0.7, as recommended in de Santiago and Lamb~\cite{deSantiago+17}. 
As for the algorithm by Sato and Izunaga~\cite{Sato+19}, we used the implementation provided by the authors. 

It should be noted that when implementing the above three algorithms, to mitigate the effect of numerical errors,
we have to modify them slightly so that only the constraints that are violated by the incumbent solution $\bm{\lambda^*}$ 
with some tolerance $\epsilon>0$ are added to the RDP (or the counterpart as for the algorithm by Sato and Izunaga~\cite{Sato+19}). 
Without such a modification, due to numerical errors, an optimal solution to the DP might be decided to be infeasible for the DP,
and the algorithm might not terminate.
In our experiments, the error parameter $\epsilon$ was set to $10^{-6}$. 

For all algorithms, we used Gurobi Optimizer 9.5.0 as a mathematical programming solver, with the default parameter settings. 
All the experiments were conducted on a machine equipped with an Apple M1 chip and 16GB RAM.

\subsection{Experiments and results}

\begin{table}[tb]
\centering
\caption{Performance of our algorithm and the state-of-the-art methods.}\label{tab:result_main}
\scalebox{0.79}{
\begin{tabular}{crrrrrrrrr}
\toprule
&\multicolumn{2}{c}{CGII+ILS~\cite{deSantiago+17}} & \multicolumn{2}{c}{Sato and Izunaga~\cite{Sato+19}} 
& \multicolumn{2}{c}{Ours (Algorithm~\ref{alg:CG_ours})} \\
\cmidrule(lr){2-3} \cmidrule(lr){4-5} \cmidrule(lr){6-7}
ID&time(s) & \#constr. &  time(s) &\#constr. & time(s) & \#constr. & Best $D(\mathcal{C})$ updated\\
\midrule
1&5.9$\pm$0.3&479.4$\pm$53.8                  &0.4              &  62      &\textbf{0.3}     & 780                 & ---\\
2&43.7$\pm$1.9&1056.0$\pm$45.5                  &\textbf{0.7}              &  80       &2.4    & 3157                 & ---\\
3&3261.4$\pm$316.5&7567.6$\pm$603.2                  &\textbf{16.4}             &  185      &42.8    & 21999                  & ---\\
4&13294.0$\pm$1991.1&18309.8$\pm$2036.0                  &\textbf{57.9}            & 205       &119.7   &  59862                 & ---\\
5&$>\text{24h}$ (20.7943)&(70160)                  &\textbf{365.4}            & 380       &1886.3  &184659                   & ---\\
6&$>\text{24h}$ (7.5892)&(2374)                  &$>\text{24h}$  &(11440)        &\textbf{5681.1}  &121894                   & $7.8250^{*}$\\
7&$>\text{24h}$ (34.0075)&(51181)                  &70195.1                 & 426       &\textbf{5860.8}  & 359002                  & $44.3879^{*}$\\
8&$>\text{24h}$ (32.7393)&(8068)                             &$>\text{24h}$           &(571)        &$>\text{24h}$ (47.2153)         &(1265072)                   &--- \\
\bottomrule
\end{tabular}
}
\end{table}

First we compare the performance of our algorithm with the aforementioned baseline methods. 
Time limit was set to 86,400s (24 hours). 
As explained above, CGII+ILS has a stochastic behavior: 
For each instance, if the algorithm does not terminate in the time limit at the first run, we do not run the algorithm for the instance anymore. 
If it is not the case, we execute the algorithm five times and report its statistics. 

The results are shown in Table~\ref{tab:result_main}, 
where we report the computation time needed and the number of constraints added to the RDP 
(or the counterpart as for the algorithm by Sato and Izunaga~\cite{Sato+19}). 
For CGII+ILS, we report the average and standard deviation of the computation time and the number of constraints over the five trials. 
When an algorithm exceeds the time limit, we report the lower bound on the optimal value of the DP 
(only for CGII+ILS and our proposed algorithm) and the number of constraints at the time limit (in parentheses). 
As the algorithm by Sato and Izunaga~\cite{Sato+19} solves the dual problems of relaxations of the MP 
(using the set-packing relaxation) and tightens the relaxation gradually,
the lower bound kept in the algorithm is not necessarily a lower bound on the optimal value of the DP; 
therefore, it is omitted in the table. 
For each instance, the best result among the algorithms (in terms of the computation time) is written in bold. 
We do not apply such a rule to the number of constraints 
because the number of constraints is not necessarily a reasonable indicator of the performance of algorithms; 
indeed, it may be a good strategy to add constraints to the RDP actively to terminate quickly.

As can be seen, our proposed algorithm outperforms the baseline methods. 
Indeed, for all instances except for the largest one (\texttt{Jazz}), 
only our algorithm terminates in reasonable time (within two hours), 
and moreover, the optimal solution to the MP derived by the optimal solution to the DP computed by the algorithm has an integrality for each of those instances; 
thus, it turns out that the algorithm obtains an optimal solution to the modularity density maximization problem for those instances. 
In particular, the optimal solutions to \texttt{Adjnoun} (ID~6) and \texttt{Football} (ID~7) are obtained for the first time; 
the best previously-known modularity density values 7.651 and 44.340 are improved to 7.8250 and 44.3879, respectively, with the optimality guarantee. 
Although the algorithm by Sato and Izunaga~\cite{Sato+19} computes an optimal solution faster for small instances, 
it has poorer performance for large instances. 
Indeed, the algorithm does not even terminate for \texttt{Adjnoun} 
and consumes more than 70,000s for \texttt{Football}.  
The performance of CGII+ILS is not comparable to that of our proposed algorithm or the algorithm by Sato and Izunaga~\cite{Sato+19}.

\begin{table}[tb]
\centering
\caption{Performance of the variants of our algorithm.}\label{tab:result_supp}
\scalebox{0.79}{
\begin{tabular}{crrrrrrrrr}
\toprule
  &\multicolumn{2}{c}{Ours w/o greedy peeling} & \multicolumn{2}{c}{Ours w/o reformulation} \\
\cmidrule(lr){2-3} \cmidrule(lr){4-5} 
ID &time(s) & \#constr. &  time(s) &\#constr. \\
\midrule
1&2.5$\pm$0.3&542.4$\pm$44.3                  &\textbf{1.0}  &781                         \\ 
2&13.7$\pm$1.7&1359.0$\pm$51.8                  &\textbf{4.7}  &3087                         \\
3&\textbf{487.3$\pm$58.1}&8660.8$\pm$713.7                  &1052.2  &23605                         \\
4&\textbf{1194.6$\pm$101.4}&20146.6$\pm$1555.7                  &1230.0  &62038                         \\ 
5&\textbf{12491.4$\pm$1445.2}&61678.6$\pm$5240.7                  &$>\text{10h}$ (21.5496)  & (214623)                        \\
6&\textbf{10185.9$\pm$847.4}&46127.6$\pm$2092.3                  &$>\text{10h}$ (7.7835)  &(118258)                         \\
7&$>\text{10h}$ (44.1272)&(42358)                  &\textbf{$>\text{10h}$ (44.3879)}  &(361468)                         \\
8&$>\text{10h}$ (31.3547)  &(4052)                  &\textbf{$>\text{10h}$ (46.4529)}  &(850531)                      \\ 
\bottomrule
\end{tabular}
}
\end{table}

Finally we investigate how much each of the proposed techniques, the greedy peeling algorithm and the reformulation, contributes to the significant performance of our algorithm. 
To this end, we implement and evaluate the following two variants of our proposed algorithm, 
one replacing the greedy peeling algorithm with the above heuristic method called the iterative local search designed by de Santiago and Lamb~\cite{deSantiago+17}
and the other replacing the reformulation with AP-II by de Santiago and Lamb~\cite{deSantiago+17}. 
Time limit was set to 36,000s (10 hours). 
Note that the first variant has a stochastic behavior due to the heuristic method by de Santiago and Lamb~\cite{deSantiago+17}; 
thus, we run the algorithm and report the results in the same way as CGII+ILS. 

The results are shown in Table~\ref{tab:result_supp}. 
As can be seen, both algorithms have poorer performance than our proposed algorithm but have better performance than CGII+ILS (see also Table~\ref{tab:result_main}). 
Therefore, we see that both of the proposed techniques do contribute to the significant performance of our algorithm. 
However, we cannot conclude which of the techniques affects the performance more significantly; 
indeed, only the algorithm using the reformulation technique succeeds in computing an optimal solution to \texttt{Polbooks} (ID~5) and \texttt{Adjnonu} (ID~6), 
but the algorithm using the greedy peeling obtains a better lower bound for the larger instances \texttt{Football} (ID~7) and \texttt{Jazz} (ID~8). 

\section{Computational Complexity Analysis}\label{sec:complexity}
In this section, we show the NP-hardness of a variant of the modularity density maximization problem and the AP appearing in column generation.

\subsection{NP-hardness of a variant of the modularity density maximization problem}
Here we study the following variant of the modularity density maximization problem: 
Given an undirected graph $G=(V,E)$, we are asked to find a partition $\mathcal{C}$ of $V$ 
that maximizes the modularity density $D(\mathcal{C})$ under the constraint that $|\mathcal{C}|\geq 2$. 
Therefore, this variant rules out the trivial clustering $\mathcal{C}=\{V\}$. 
We call the variant the modularity density maximization problem with at least two clusters. 
We prove the following theorem: 
\begin{theorem}\label{thm:main}
The modularity density maximization problem with at least two clusters is NP-hard even on regular graphs.
\end{theorem}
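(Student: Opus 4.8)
The plan is to reduce from \MC{}, the maximum cut problem on $3$-regular graphs, which is well known to be NP-hard. Given a $3$-regular graph $H=(V_H,E_H)$ with $|V_H|=n$ and a target value $k$, I would construct in polynomial time a \emph{regular} graph $G$, following the template of the reduction from maximum cut to uniform sparsest cut of Bonsma et al.~\cite{Bonsma+12} but modifying it so that the output graph is regular (this modification is the ``not totally trivial'' difference), together with a threshold $\theta$, and prove that $H$ has a cut of size at least $k$ if and only if the optimum of the modularity density maximization problem with at least two clusters on $G$ is at least $\theta$.

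The first ingredient is a closed form for $D$ on regular graphs. If $G$ is $\delta$-regular on $N$ vertices, then $\sum_{v\in C}\deg(v)=\delta|C|=2|E(C)|+|E(C,V(G)\setminus C)|$ gives $2|E(C)|-|E(C,V(G)\setminus C)|=\delta|C|-2|E(C,V(G)\setminus C)|$ for every $C\subseteq V(G)$, hence
\begin{align*}
D(\mathcal{C})=\delta\,|\mathcal{C}|-2\sum_{C\in\mathcal{C}}\frac{|E(C,V(G)\setminus C)|}{|C|},
\end{align*}
and in particular $D(\{S,V(G)\setminus S\})=2\delta-2\,|E(S,V(G)\setminus S)|\cdot\frac{N}{|S|\,(N-|S|)}$ for every bipartition. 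Thus, restricted to bipartitions, maximizing $D$ on a regular graph is exactly minimizing the uniform sparsest-cut ratio $|E(S,V(G)\setminus S)|/(|S|\,(N-|S|))$. I would design $G$ so that this ratio is minimized precisely by the bipartitions that encode the maximum cuts of $H$; the same construction then yields, as a by-product, NP-hardness of the uniform sparsest cut problem on regular graphs.

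The heart of the argument is a structural lemma: on the constructed instance, an optimal partition with $|\mathcal{C}|\ge 2$ may be taken to be a (near-balanced) bipartition. I would prove this by using the displayed identity to show that (i) a partition with many small clusters pays a large penalty $2\sum_C |E(C,V(G)\setminus C)|/|C|$, because $G$ is chosen dense enough that every nontrivial cut crosses many edges, so such partitions are dominated by balanced bipartitions; and (ii) among partitions into few clusters, merging clusters or rebalancing never decreases $D$ on this instance. Granting the lemma, the problem collapses to analyzing $\min_{S}|E(S,V(G)\setminus S)|/(|S|\,(N-|S|))$ over $G$, which by construction is governed by $\max_{S}|E_H(S,V_H\setminus S)|$; choosing $\theta$ accordingly closes both directions of the reduction.

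I expect the structural lemma to be the main obstacle. A natural first attempt takes $G$ to be the complement of $H$, which is $(n-4)$-regular and already has the correct sparsest-cut behaviour up to lower-order terms; however, for $3$-regular $H$ the comparison between a near-balanced maximum cut and a small unbalanced cut is quantitatively tight, so the pure complement does not obviously force the optimal cut to be balanced. The construction therefore likely needs extra gadgetry — e.g., additional apex-type vertices or a controlled blow-up of $H$ — that simultaneously (a) keeps $G$ regular, (b) makes unbalanced cuts and partitions into three or more clusters strictly suboptimal, and (c) does not alter the $H$-contribution to any cut, so that the reduction remains faithful. Verifying these three requirements jointly, and checking that the threshold $\theta$ separates yes-instances from no-instances, is where the real work lies.
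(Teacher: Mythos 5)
Your outline tracks the paper's route exactly---reduce from \MC{}, use the identity $D(\mathcal{C})=\delta|\mathcal{C}|-2\sum_{C}|E(C,V\setminus C)|/|C|$ on regular graphs to turn the problem into uniform sparsest cut for bipartitions, prove a structural lemma forcing a balanced bipartition, and harvest NP-hardness of uniform sparsest cut on regular graphs as a by-product. But the proposal stops exactly where the proof begins: you never exhibit the regular gadget, and you yourself concede that the ``pure complement of $H$'' does not work and that ``the real work lies'' in designing and verifying the extra gadgetry. That gadget \emph{is} the theorem; without it there is no reduction, only a statement of intent. So as it stands this is a genuine gap, not a proof.

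For the record, the paper's construction is the following. For each $v\in V$ of the $3$-regular graph $G$ it creates two independent sets $I_v=\{v_0,\dots,v_{M-1}\}$ and $I'_v=\{v'_0,\dots,v'_{M-1}\}$ with $M=n^3$; it places a copy of $G$ on $\{v_0\mid v\in V\}$ and another on $\{v'_0\mid v\in V\}$; and it joins $I_v$ to $I'_v$ by a nearly complete bipartite graph with exactly $M^2-M-3$ edges, arranged so that every vertex of the resulting graph $G^*$ has degree exactly $M-1$ (the three ``missing'' edges at $v_0$ compensate for its three $G$-edges). The instance of \MD{} is the \emph{complement} $\overline{G^*}$, which is $(2Mn-M)$-regular, with threshold $r^*=2M-4-\tfrac{12}{M}+\tfrac{8k}{Mn}$. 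Your structural lemma is then proved by three explicit counting arguments, each showing $D(\mathcal{C})<r^*$: (i) if $|\mathcal{C}|\ge 3$, bound $|\overline{E^*}(C)|\le\binom{|C|}{2}$ in the first identity of the regular-graph lemma; (ii) if the bipartition is unbalanced ($|C|\le Mn-1$), use $(M-1)$-regularity of $G^*$ to bound the crossing edges; (iii) if some $I_v$ is split, or some pair $I_v,I'_v$ lands on the same side, the number of $G^*$-edges inside the bipartition classes drops by enough to lose more than the $O(k/(Mn))$ term that encodes the cut value. Note that the paper does not argue by ``merging or rebalancing never decreases $D$''; it rules out each bad configuration directly by an absolute upper bound below $r^*$. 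The choice $M=n^3$ is what makes all three comparisons non-tight, which is precisely the quantitative difficulty you flagged but did not resolve.
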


We first introduce the decision version of the problem: 
\begin{problem}[\MD]
Given an undirected graph $G=(V,E)$ and a real number $r\in \mathbb{R}$, 
does there exist a partition $\mathcal{C}$ of $V$ with $|\mathcal{C}|\geq 2$ that satisfies $D(\mathcal{C})\geq r$?
\end{problem}
To prove Theorem~\ref{thm:main}, it suffices to show the NP-completeness of \MD \ on regular graphs. 
For a given partition $\mathcal{C}$ of $V$ with $|\mathcal{C}|\geq 2$ and a real number $r$, we can check if $D(\mathcal{C})\geq r$ in polynomial time; 
thus, \MD \ is in the class NP. 

To show the NP-completeness, 
we construct a polynomial-time reduction from the decision version of the maximum cut problem on $3$-regular graphs, 
which is known to be NP-complete~\cite{Alimonti+97}. 
Let $G=(V,E)$ be an undirected graph. 
A bipartition $\{X,Y\}$ of $V$ (i.e., a partition of $V$ with size two) is called a cut of $G$. 
The value of cut $\{X,Y\}$ is defined as the number of edges between $X$ and $Y$, i.e., $|\{\{u,v\}\in E\mid u\in X,\ v\in Y\}|$. 
We denote by $\textsf{val}(\{X,Y\})$ the value of cut $\{X,Y\}$. 
Then we can introduce the problem: 
\begin{problem}[\MC]
Given a $3$-regular graph $G=(V,E)$ and a positive integer $k\in \mathbb{Z}_{>0}$, 
does there exist a cut $\{X,Y\}$ of $G$ that satisfies $\textsf{val}(\{X,Y\})\geq k$?
\end{problem}

The following lemma guarantees the existence of the desired reduction: 
\begin{lemma}\label{lem:reduction_to_MD}
There exists a polynomial-time reduction from \MC \ to \MD \ on regular graphs. 
\end{lemma}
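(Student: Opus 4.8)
The plan is to give a polynomial-time many-one reduction from \MC\ to \MD\ that outputs a \emph{regular} instance, following the blueprint of the classical reduction from maximum cut to the uniform sparsest cut problem but modifying it so that the target graph is regular. The starting observation is that the modularity density and sparsest-cut objectives coincide on bipartitions of regular graphs: if $G'=(V',E')$ is $d$-regular with $N=|V'|$, then for every bipartition $\{X,Y\}$ of $V'$ one has $2|E'(X)|=d|X|-|E'(X,Y)|$ and the analogous identity for $Y$, whence
\begin{align*}
D(\{X,Y\})=2d-2N\cdot\frac{|E'(X,Y)|}{|X|\cdot|Y|}.
\end{align*}
Thus, among bipartitions of a fixed regular graph, maximizing $D$ is the same as minimizing the product-form sparsest-cut ratio; this equivalence will also deliver the stated by-product that uniform sparsest cut is NP-hard on regular graphs.

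Concretely I would proceed as follows. First, from the $3$-regular \MC\ instance $G$ build a regular graph $G'$, the natural candidate being the complement of $G$ (which is $(n-4)$-regular), possibly after first padding $G$ with a balance-forcing gadget chosen so as to keep the whole graph regular; the role of the gadget is to ensure that any near-sparsest cut of $G'$ is forced to be (approximately) balanced. Second, combine the displayed identity with the complement relation $|E_{\bar G}(X,Y)|=|X|\cdot|Y|-|E_G(X,Y)|$, which rewrites the sparsest-cut ratio of $G'$ as $1-|E_G(X,Y)|/(|X|\cdot|Y|)$, so that on balanced bipartitions (where the denominator is essentially fixed) minimizing the ratio is exactly maximizing the cut value of $G$. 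Third, show that for the dense graph $G'$ the optimum over all partitions with $|\mathcal{C}|\ge 2$ is attained at a balanced bipartition: a partition into $k\ge 3$ parts of an almost-complete graph necessarily produces a large number of inter-cluster edges, and a counting estimate of $\sum_{C}(2|E'(C)|-|E'(C,V'\setminus C)|)/|C|$ bounds its value strictly below that of the best bipartition, and a similar estimate rules out heavily unbalanced bipartitions. Finally, set the threshold $r$ to the modularity-density value attained by a balanced bipartition of $G'$ that corresponds to a cut of value exactly $k$ in $G$; since all sizes involved are polynomial in $n$, the construction is polynomial-time, and the chain of equivalences yields that $G$ has a cut of value at least $k$ if and only if $G'$ has a partition with at least two clusters and $D\ge r$.

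The main obstacle I expect is reconciling three demands on the construction simultaneously: the balance-forcing gadget must be regular (so that it perturbs all degrees uniformly and $G'$ stays regular), it must be heavy enough that unbalanced cuts and $\ge 3$-cluster partitions are strictly suboptimal, yet light enough that the threshold $r$ still separates the YES and NO cases exactly rather than being swamped by the gadget's own contribution — it is this balancing act, rather than any single estimate, that makes the construction differ non-trivially from the original maximum-cut-to-sparsest-cut reduction. A secondary technical point is proving the "$k\ge 3$ is suboptimal" claim with a bound tight enough to interact cleanly with the chosen $r$; here the density of $G'$, being the complement of a bounded-degree graph, is precisely what makes the argument go through.
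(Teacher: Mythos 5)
Your high-level plan coincides with the paper's: the identity $D(\{X,Y\})=2d-2N\,|E'(X,Y)|/(|X||Y|)$ for $d$-regular graphs is exactly the paper's Lemma~\ref{lem:regular}, the reduction does go through the complement of a regular auxiliary graph, and the proof does proceed by ruling out partitions with $\ge 3$ clusters and unbalanced bipartitions before reading off the cut value. However, there is a genuine gap: the balance-forcing gadget, which you correctly identify as ``the main obstacle,'' is never constructed, and without it there is no reduction to verify. Your ``natural candidate,'' the complement of $G$ itself, cannot work as stated: nothing in $\overline{G}$ forces optimal bipartitions to be balanced or to respect any vertex-level structure of $G$, so the chain ``minimize the ratio $\Leftrightarrow$ maximize $\textsf{val}(\{X,Y\})$'' breaks at the very first step. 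The entire difficulty of the lemma is concentrated in the gadget, so a proposal that defers it has not yet proved anything.

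For comparison, the paper's construction replaces each vertex $v$ of the $3$-regular graph $G$ by \emph{two} sets $I_v,I'_v$ of size $M=n^3$, places a copy of $G$ on the representatives $\{v_0\}$ and another on $\{v'_0\}$, and joins $I_v$ to $I'_v$ by exactly $M^2-M-3$ edges in a pattern chosen so that the resulting graph $G^*$ is $(M-1)$-regular; the \MD\ instance is then $(\overline{G^*},\,r^*)$ with $r^*=2M-4-\frac{12}{M}+\frac{8k}{Mn}$. The intended solution is $C=(\bigcup_{v\in X}I_v)\cup(\bigcup_{v\in Y}I'_v)$, and the scale separation $M=n^3\gg n$ is what makes all three of your competing demands compatible: splitting some $I_v$, merging $I_v$ with $I'_v$, or using $\ge 3$ clusters each costs $D$ an amount of order at least $1/n^3$ larger than the entire signal $8\,\textsf{val}(\{X,Y\})/(Mn)$ contributed by the cut of $G$, while the gadget's own contribution to $D$ is a fixed constant absorbed into $r^*$. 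Until you exhibit a gadget with these quantitative properties (and verify regularity of the pre-complement graph, which is itself a nontrivial degree-counting constraint), the ``chain of equivalences'' in your final sentence is not established.
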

Our reduction is based on an existing reduction from the maximum cut problem to the (uniform) sparsest cut problem developed by Bonsma et al.~\cite{Bonsma+12}; 
however, the difference from theirs is not totally trivial.
Before starting the proof of the lemma, we present a useful fact about the modularity density value for regular graphs: 
\begin{lemma}\label{lem:regular}
Let $G=(V,E)$ be a $d$-regular graph and $\mathcal{C}$ a partition of $V$. 
Then the modularity density can be rewritten as follows: 
\begin{align*}
D(\mathcal{C})= 4\sum_{C\in \mathcal{C}}\frac{|E(C)|}{|C|}-d|\mathcal{C}| \quad \text{and} \quad 
D(\mathcal{C})=d|\mathcal{C}| - 2\sum_{C\in \mathcal{C}}\frac{|E(C,V\setminus C)|}{|C|}. 
\end{align*}
In particular, if $|\mathcal{C}|=2$, say $\mathcal{C}=\{C,V\setminus C\}$, then the above second expression can be rewritten as 
\begin{align*}
D(\mathcal{C})=2d-2n\frac{|E(C,V\setminus C)|}{|C||V\setminus C|}. 
\end{align*}
\end{lemma}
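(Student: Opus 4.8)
The plan is to derive all three expressions from a single counting identity valid in any $d$-regular graph: for every $C\subseteq V$, summing degrees over $C$ counts each internal edge twice and each boundary edge once, so
\begin{align*}
2|E(C)|+|E(C,V\setminus C)|=\sum_{v\in C}\deg(v)=d|C|.
\end{align*}
This lets me trade $|E(C)|$ for $|E(C,V\setminus C)|$ (or vice versa) inside each term of $D(\mathcal{C})=\sum_{C\in\mathcal{C}}\frac{2|E(C)|-|E(C,V\setminus C)|}{|C|}$.

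For the first expression I would substitute $|E(C,V\setminus C)|=d|C|-2|E(C)|$ into the numerator, obtaining $2|E(C)|-|E(C,V\setminus C)|=4|E(C)|-d|C|$; dividing by $|C|$ and summing over $C\in\mathcal{C}$ gives $D(\mathcal{C})=4\sum_{C\in\mathcal{C}}\frac{|E(C)|}{|C|}-d|\mathcal{C}|$, since each cluster contributes $-d$ to the last sum. For the second expression I instead substitute $2|E(C)|=d|C|-|E(C,V\setminus C)|$ into the numerator, giving $2|E(C)|-|E(C,V\setminus C)|=d|C|-2|E(C,V\setminus C)|$, and again divide by $|C|$ and sum.

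For the $|\mathcal{C}|=2$ special case, write $\mathcal{C}=\{C,V\setminus C\}$ and note that both clusters share the same boundary edge set, so $|E(C,V\setminus C)|=|E(V\setminus C,V\setminus(V\setminus C))|$. Starting from the second expression with $|\mathcal{C}|=2$, the remaining sum is
\begin{align*}
\frac{|E(C,V\setminus C)|}{|C|}+\frac{|E(C,V\setminus C)|}{|V\setminus C|}=|E(C,V\setminus C)|\cdot\frac{|C|+|V\setminus C|}{|C|\,|V\setminus C|}=\frac{n\,|E(C,V\setminus C)|}{|C|\,|V\setminus C|},
\end{align*}
using $|C|+|V\setminus C|=n$; plugging this in yields $D(\mathcal{C})=2d-2n\frac{|E(C,V\setminus C)|}{|C|\,|V\setminus C|}$. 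I expect no genuine obstacle here — the only point requiring care is getting the handshake identity's factor of $2$ on internal edges right and tracking that the boundary term appears once per cluster when $|\mathcal{C}|=2$; everything else is a routine algebraic rearrangement.
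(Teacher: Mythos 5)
Your proposal is correct and follows essentially the same route as the paper's proof: both use the degree-sum identity $2|E(C)|+|E(C,V\setminus C)|=d|C|$ to substitute one quantity for the other in each term of $D(\mathcal{C})$, and the $|\mathcal{C}|=2$ case (which the paper dismisses as trivial) is handled by exactly the combination of fractions you write out.
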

\begin{proof}[Proof of Lemma~\ref{lem:regular}]
We have 
\begin{align*}
D(\mathcal{C})
=\sum_{C\in \mathcal{C}}\frac{2|E(C)|-|E(C,V\setminus C)|}{|C|}
=\sum_{C\in \mathcal{C}}\frac{2|E(C)|-(d|C|-2|E(C)|)}{|C|}
=4\sum_{C\in \mathcal{C}}\frac{|E(C)|}{|C|}-d|\mathcal{C}|
\end{align*}
and 
\begin{align*}
D(\mathcal{C})
=\sum_{C\in \mathcal{C}}\frac{2|E(C)|-|E(C,V\setminus C)|}{|C|}
=\sum_{C\in \mathcal{C}}\frac{d|C|-2|E(C,V\setminus C)|}{|C|}
=d|\mathcal{C}|-2\sum_{C\in \mathcal{C}}\frac{|E(C,V\setminus C)|}{|C|}. 
\end{align*}
The case of $|\mathcal{C}|=2$ is trivial. 
\end{proof}

\begin{proof}[Proof of Lemma~\ref{lem:reduction_to_MD}]
Let $(G,k)$ be an instance of \MC, where $G=(V,E)$ is a $3$-regular graph and $k$ is a positive integer. 
Let $n=|V|$. Note that $n\geq 4$. 
From this instance, we construct an instance $(\overline{G^*},r^*)$ of \MD \ on regular graphs as follows: 
For each $v\in V$, we construct two sets $I_v=\{v_0,v_1,\dots, v_{M-1}\}$ and $I'_v=\{v'_0,v'_1,\dots, v'_{M-1}\}$, 
each of which has size $M=n^3$. 
Note that for example, for a vertex named $u$, we have $I_u=\{u_0,u_1,\dots, u_{M-1}\}$ and $I'_u=\{u'_0,u'_1,\dots, u'_{M-1}\}$. 
For each $e=\{u,v\}\in E$, we create two edges $\{u_0,v_0\}$ and $\{u'_0,v'_0\}$. 
Hence, the subgraphs induced by $\{v_0\mid v\in V\}$ and $\{v'_0\mid v\in V\}$, respectively, are isomorphic to $G$. 
For each $v\in V$, we connect two subsets $I_v$ and $I'_v$ as follows: 
add an edge connecting $v_0$ and each of $\{v'_j\mid j=4,5,\dots, M-1\}$; 
add an edge connecting $v_i$ ($i=1,2,3$) and each of $\{v'_j\mid j=1,2,\dots, M-1\}$; 
add an edge connecting $v_i$ ($i=4,5,\dots, M-1$) and each of $\{v'_j\mid j=0,1,\dots, M-1,\ j\neq i\}$ (see Figure~\ref{fig:connection}). 
We denote by $G^*=(V^*,E^*)$ the graph constructed above. 
Note that $G^*$ is an $(M-1)$-regular graph defined on $|V^*|=2Mn$ vertices. 
Let $\overline{G^*}=(V^*,\overline{E^*})$ be a complement graph of $G^*$. 
In addition, set $r^* = 2M - 4 - \frac{12}{M} + \frac{8k}{Mn}$. 
As $\overline{G^*}$ is a $(2Mn-M)$-regular graph and $r^*$ is a positive real number, 
$(\overline{G^*}, r^*)$ is a valid instance of \MD. 
Obviously we can construct $(\overline{G^*},r^*)$ from $(G,r)$ in polynomial time. 
Thus, to prove the lemma, it suffices to show the following: 
$G$ has a cut $\{X,Y\}$ that satisfies $\textsf{val}(\{X,Y\})\geq k$ 
if and only if $\overline{G^*}$ has a partition $\mathcal{C}$ of $V^*$ with $|\mathcal{C}|\geq 2$ 
that satisfies $D(\mathcal{C})\geq r^*$. 
\begin{figure}
\centering
\includegraphics[scale=1.1]{./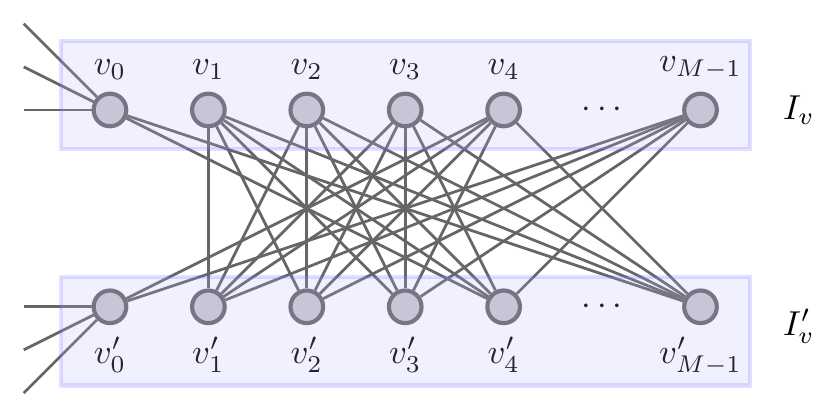}
\caption{The connection between $I_v$ and $I'_v$ (on $G^*$).}\label{fig:connection}
\end{figure}

Assume that $G$ has a cut $\{X,Y\}$ that satisfies $\textsf{val}(\{X,Y\})\geq k$. 
Now consider the partition $\mathcal{C}=\{C,V^*\setminus C\}$ of $V^*$ 
such that $C=\left(\bigcup_{v\in X} I_v\right) \cup \left(\bigcup_{v\in Y} I'_v\right)$. 
Then using Lemma~\ref{lem:regular}, we can evaluate the modularity density value of $\mathcal{C}$ as follows: 
\begin{align*}
D(\mathcal{C})
&=2(2Mn-M) - 4Mn\frac{|\overline{E^*}(C,V^*\setminus C)|}{(Mn)^2}\\
&=2(2Mn-M) - 4Mn\left(1-\frac{|E^*(C,V^*\setminus C)|}{(Mn)^2}\right)\\
&\geq 2(2Mn-M) - 4Mn\left(1-\frac{n(M^2-M-3)+2k}{(Mn)^2}\right)\\
&=2M - 4 - \frac{12}{M} + \frac{8k}{Mn}\\
&=r^*, 
\end{align*}
where the inequality follows from the fact that 
for each $v\in V$, two subsets $I_v$ and $I'_v$ are connected by $M^2-M-3$ edges on $G^*$ 
and the fact that $\textsf{val}(\{X,Y\})\geq k$. 
Therefore, we see that $\mathcal{C}$ is a partition that we desired. 

Assume that $\overline{G^*}$ has a partition $\mathcal{C}$ of $V^*$ with $|\mathcal{C}|\geq 2$ that satisfies $D(\mathcal{C})\geq r^*$. 
First we show that such a partition $\mathcal{C}$ has size exactly equal to two. 
Suppose for contradiction that $|\mathcal{C}|\geq 3$. 
Then using Lemma~\ref{lem:regular}, we have 
\begin{align*}
D(\mathcal{C}) 
&= 4\sum_{C\in \mathcal{C}}\frac{|\overline{E^*}(C)|}{|C|}-(2Mn-M)|\mathcal{C}|\\
&\leq 4\sum_{C\in \mathcal{C}}\frac{|C|-1}{2}-(2Mn-M)|\mathcal{C}|\\
&= 4Mn-2|\mathcal{C}|-(2Mn-M)|\mathcal{C}|\\
&\leq -2Mn+3M-6\\
&< r^*, 
\end{align*}
where the last inequality follows from $n\geq 4$. 
This contradicts to $D(\mathcal{C})\geq r^*$. 
Thus, we can represent $\mathcal{C}$ as $\mathcal{C}=\{C,V^*\setminus C\}$. 

Next we show that $|C|=|V^*\setminus C|=Mn$. 
Suppose for contradiction that $|C|\leq Mn-1$. 
Then using Lemma~\ref{lem:regular}, we have 
\begin{align*}
D(\mathcal{C})
&=2(2Mn-M) - 4Mn\frac{|\overline{E^*}(C,V^*\setminus C)|}{|C||V^*\setminus C|}\\
&=2(2Mn-M) - 4Mn\left(1-\frac{|E^*(C,V^*\setminus C)|}{|C||V^*\setminus C|}\right)\\
&\leq 2(2Mn-M) - 4Mn\left(1-\frac{(M-1)|C|}{|C|(2Mn-|C|)}\right)\\
&= -2M +\frac{4Mn(M-1)}{2Mn-|C|}\\
&\leq -2M +\frac{4Mn(M-1)}{Mn+1}\\
&=2M-4-\frac{12}{M}-\frac{4M-12n-4-\frac{12}{M}}{Mn+1}\\
&<2M-4-\frac{12}{M}+\frac{8k}{Mn}\\
&=r^*, 
\end{align*}
where the first inequality follows from the fact that $G^*$ is $(M-1)$-regular 
and the last inequality follows from $n\geq 4$. 
This contradicts to $D(\mathcal{C})\geq r^*$. 
Hence, we can represent $\mathcal{C}$ as $\mathcal{C}=\{C,V^*\setminus C\}$ in which $|C|=|V^*\setminus C|=Mn$ holds. 

Finally we show that for any $v\in V$, none of $I_v$ and $I'_v$ is divided by $\mathcal{C}$. 
Suppose for contradiction that for some $v\in V$, it holds that $I_v\cap C\neq \emptyset$ and $I_v\cap (V\setminus C)\neq \emptyset$. 
The following discussion also holds for $I'_v$. 
Recall that the number of edges between $I_v$ and $I'_v$ on $G^*$ is equal to $M^2-M-3$. 
An important fact is that among those $M^2-M-3$ edges, at most $M^2-2M+2$ edges are included in $E^*(C,V^*\setminus C)$, as $I_v$ is divided by $\mathcal{C}$. 
To see this, let $a=|I_v\cap C|$ and $b=|I'_v\cap C|$. 
Then we have 
\begin{align*}
&|E^*((I_v\cup I'_v)\cap C, (I_v\cup I'_v)\cap (V^*\setminus C))|\\
&=a(M-b)+b(M-a) - |\overline{E^*}((I_v\cup I'_v)\cap C, (I_v\cup I'_v)\cap (V^*\setminus C))|\\
&=(a+b)M -2ab - |\overline{E^*}((I_v\cup I'_v)\cap C, (I_v\cup I'_v)\cap (V^*\setminus C))|. 
\end{align*}
Consider the case where $a+b=M$ holds. Then as $I_v$ is divided by $\mathcal{C}$, we have $a\neq 0$ and $b\neq 0$, and so $ab\geq M-1$. 
Thus, we have 
\begin{align*}
|E^*((I_v\cup I'_v)\cap C, (I_v\cup I'_v)\cap (V^*\setminus C))|
\leq M^2-2(M-1)
=M^2-2M+2.
\end{align*}
Consider the other case where $a+b=\ell$ ($\ell=1,2,\dots, M-1$) holds. 
Then we have 
\begin{align*}
|E^*((I_v\cup I'_v)\cap C, (I_v\cup I'_v)\cap (V^*\setminus C))|
&=\ell M-2ab - |\overline{E^*}((I_v\cup I'_v)\cap C, (I_v\cup I'_v)\cap (V^*\setminus C))|\\
&\leq \begin{cases}
\ell M - 0 - \ell  &(\text{if } b=0)\\
\ell M - 2(\ell -1) - 0 &(\text{otherwise})
\end{cases}\\
&\leq \ell(M-1)+1\\
&\leq  M^2-2M+2. 
\end{align*}
Therefore, using the above fact, we can evaluate the modularity density value of $\mathcal{C}$ as follows: 
\begin{align*}
D(\mathcal{C})
&=2(2Mn-M) - 4Mn\left(1-\frac{|E^*(C,V^*\setminus C)|}{(Mn)^2}\right)\\
&\leq 2(2Mn-M) - 4Mn\left(1-\frac{3n+(n-1)(M^2-M-3)+M^2-2M+2}{(Mn)^2}\right)\\
&=2M-4-\frac{4}{n}+\frac{20}{Mn}\\
&<2M-4-\frac{12}{M}+\frac{8k}{Mn}\\
&=r^*, 
\end{align*}
where the second inequality follows from $n\geq 4$. 
This contradicts to $D(\mathcal{C})\geq r^*$. 
Moreover, by a similar argument (where $M^2-2M+2$ in the above calculation is replaced by $0$), 
we can also show that for any $v\in V$, it is impossible that both $I_v$ and $I'_v$ are in the same cluster in $\mathcal{C}$. 
Therefore, we see that $\mathcal{C}$ can be represented as $\mathcal{C}=\{C,V^*\setminus C\}$ such that 
\begin{align*}
C=\left(\bigcup_{v\in X}I_v\right)\cup \left(\bigcup_{v\in Y}I'_v\right)
\end{align*}
for some cut $\{X,Y\}$ of $G$. 
Then we can evaluate the modularity density value of $\mathcal{C}$ as follows: 
\begin{align*}
D(\mathcal{C})=2M - 4 - \frac{12}{M} + \frac{8\cdot \textsf{val}(\{X,Y\})}{Mn}. 
\end{align*}
As $D(\mathcal{C})\geq r^*$, we have $\textsf{val}(\{X,Y\})\geq k$. Therefore, $\{X,Y\}$ is a cut that we desired, and we are done. 
\end{proof}

As a by-product of our reduction, we can see that the uniform sparsest cut problem is NP-hard even on regular graphs. 
This is a stronger result of the NP-hardness of the uniform sparsest cut problem on general graphs, proved by Bonsma et al.~\cite{Bonsma+12}. 
In the uniform sparsest cut problem, given an undirected graph $G=(V,E)$, 
we are asked to find a cut $\{C,V\setminus C\}$ of $G$ that minimizes $\frac{|E(C,V\setminus C)|}{|C||V\setminus C|}$. 
On the gadget constructed above 
(i.e., the instance of (the decision version of) the modularity density maximization problem with at least two clusters 
constructed from an instance of (the decision version of) the maximum cut problem on $3$-regular graphs), 
the modularity density maximization problem with at least two clusters is essentially the same as the uniform sparsest cut problem. 
In fact, any optimal solution $\mathcal{C}$ satisfies $|\mathcal{C}|=2$ 
and hence by Lemma~\ref{lem:regular}, the objective is equivalent to that of the uniform sparsest cut problem. 
\begin{theorem}
The uniform sparsest cut problem is NP-hard even on regular graphs. 
\end{theorem}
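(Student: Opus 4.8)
The plan is to observe that the gadget $\overline{G^*}$ already built in the proof of Lemma~\ref{lem:reduction_to_MD} witnesses NP-hardness for the uniform sparsest cut problem as well, essentially for free. Concretely, I would consider the decision version of the problem (given a regular graph $G$ and a rational $\rho$, does $G$ admit a cut $\{C,V\setminus C\}$ with $\frac{|E(C,V\setminus C)|}{|C||V\setminus C|}\le\rho$?) and reduce \MC\ to it. Given an instance $(G,k)$ of \MC\ with $|V|=n\ge 4$, I would construct $\overline{G^*}$ exactly as in the proof of Lemma~\ref{lem:reduction_to_MD} (so $\overline{G^*}$ is $(2Mn-M)$-regular on $2Mn$ vertices, with $M=n^3$), and set the target ratio to
\[
\rho^*:=\frac{2(2Mn-M)-r^*}{4Mn},\qquad r^*=2M-4-\tfrac{12}{M}+\tfrac{8k}{Mn},
\]
which is rational and computable in polynomial time.

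The key step is the affine identity supplied by Lemma~\ref{lem:regular}: for every bipartition $\mathcal{C}=\{C,V^*\setminus C\}$ of $\overline{G^*}$,
\[
D(\mathcal{C})=2(2Mn-M)-4Mn\cdot\frac{|\overline{E^*}(C,V^*\setminus C)|}{|C||V^*\setminus C|},
\]
so $D$ is a strictly decreasing affine function of the sparsest-cut ratio; hence $D(\mathcal{C})\ge r^*$ if and only if that ratio is at most $\rho^*$. Plugging this into Lemma~\ref{lem:reduction_to_MD} finishes the argument: in the forward direction, that lemma already produces, from a cut $\{X,Y\}$ of $G$ of value $\ge k$, a \emph{bipartition} $\mathcal{C}=\{C,V^*\setminus C\}$ with $D(\mathcal{C})\ge r^*$, i.e.\ a cut of $\overline{G^*}$ of ratio $\le\rho^*$; in the converse direction, any cut of $\overline{G^*}$ of ratio $\le\rho^*$ is a bipartition with $D(\mathcal{C})\ge r^*$ and $|\mathcal{C}|\ge 2$, so the structural part of the proof of Lemma~\ref{lem:reduction_to_MD} (balancedness $|C|=|V^*\setminus C|=Mn$ and the fact that no set $I_v$ or $I'_v$ is split or merged) applies verbatim and exhibits a cut $\{X,Y\}$ of $G$ with $\textsf{val}(\{X,Y\})\ge k$. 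Thus $(G,k)\mapsto(\overline{G^*},\rho^*)$ is a valid polynomial-time reduction, and NP-hardness follows from the NP-completeness of \MC~\cite{Alimonti+97}.

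Because the substance is inherited from Lemma~\ref{lem:reduction_to_MD}, there is no serious obstacle; the step needing the most care is the bookkeeping around $\rho^*$ and the transfer of the converse argument. Specifically, one must check that the affine correspondence $D\leftrightarrow\text{ratio}$ is orientation-reversing (immediate, since its slope $-4Mn$ is negative), so that the threshold $r^*$ translates into $\rho^*$ with no boundary slack, and that $\rho^*$ has a polynomially bounded rational description; one must also note that the structural estimates in the proof of Lemma~\ref{lem:reduction_to_MD} were derived only from ``$|\mathcal{C}|\ge 2$ and $D(\mathcal{C})\ge r^*$'', a condition satisfied by any bipartition achieving ratio $\le\rho^*$, so they carry over unchanged. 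No new combinatorial estimate on $G^*$ or $\overline{G^*}$ is required.
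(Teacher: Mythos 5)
Your proof is correct and follows essentially the same route as the paper, which also obtains this theorem as a by-product of the gadget $\overline{G^*}$ from Lemma~\ref{lem:reduction_to_MD} together with the affine identity of Lemma~\ref{lem:regular} relating $D(\mathcal{C})$ to the cut ratio for bipartitions of a regular graph. Your version is merely more explicit than the paper's (which argues via optimal solutions rather than an explicit threshold $\rho^*$), but the substance is identical.
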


\subsection{NP-hardness of the auxiliary problem}

We investigate the computational complexity of the AP appearing in column generation for the modularity density maximization problem. 
Specifically, we study the following problem: 
Given an undirected graph $G=(V,E)$ and $\bm{\lambda}=(\lambda_v)_{v\in V}\in \mathbb{R}^V$, 
we are asked to find a vertex subset $S\subseteq V$ that maximizes $\frac{2|E(S)|-|E(S,V\setminus S)|}{|S|}-\sum_{v\in S}\lambda_v$. 
When solving the AP in a column generation algorithm, $\bm{\lambda}$ is determined by an optimal solution $\bm{\lambda^*}$ to the RDP. 
Here, to investigate the computational complexity in general, we suppose that $\bm{\lambda}$ is also a part of the input. 
Our goal is to prove the following theorem: 
\begin{theorem}\label{thm:main_AP}
The AP (with $\bm{\lambda}$ input) is NP-hard even for $(n-4)$-regular graphs and $\bm{\lambda} > \bm{0}$. 
\end{theorem}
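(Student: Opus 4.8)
The plan is to give a polynomial-time reduction from \Clique\ to (the decision version of) the AP. First I would record why \Clique\ is itself NP-complete, isolating this as a preliminary lemma: a graph on $n$ vertices is $(n-4)$-regular precisely when its complement is $3$-regular, and the cliques of $G$ are exactly the independent sets of the complement $\overline{G}$, so \Clique\ is equivalent to the maximum independent set problem on cubic graphs, which is NP-complete.

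Given an instance $(G,k)$ of \Clique\ with $G=(V,E)$ an $(n-4)$-regular graph on $n$ vertices, the AP instance I would build uses the \emph{same} graph $G$ together with the uniform weight vector $\bm{\lambda}$ given by $\lambda_v = 2-\frac{2}{n^2}$ for every $v\in V$. This satisfies $\bm{\lambda}>\bm{0}$ (as $n\ge 2$), and $G$ is trivially $(n-4)$-regular, so the output lands in the claimed restricted class. Writing $g(S)=\frac{2|E(S)|-|E(S,V\setminus S)|}{|S|}-\sum_{v\in S}\lambda_v$ for the AP objective and setting $r=\frac{2k}{n^2}-n+2$, I would prove that $G$ has a clique of size $k$ if and only if $\max_{\emptyset\neq S\subseteq V} g(S)\ge r$. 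Since the construction is clearly polynomial, this gives the theorem.

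The heart of the matter is that equivalence. Because $G$ is $(n-4)$-regular, $|E(S,V\setminus S)|=(n-4)|S|-2|E(S)|$, so $g(S)=4\,d(S)-(n-4)-\lambda|S|$ with $d(S)=|E(S)|/|S|$ and $\lambda=2-\frac{2}{n^2}$. Two elementary density facts then do the rest: $d(S)\le\frac{|S|-1}{2}$, with equality iff $S$ induces a clique; and $d(S)\le\frac{|S|-1}{2}-\frac{1}{|S|}$ whenever $S$ is not a clique. For the forward direction, a $k$-clique $S$ has $d(S)=\frac{k-1}{2}$, and a short computation gives $g(S)=(2-\lambda)k-n+2=r$. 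For the converse, take $S^{*}$ attaining $\max_S g(S)\ge r$: from $g(S^{*})\le(2-\lambda)|S^{*}|-n+2$ and $2-\lambda=\frac{2}{n^2}>0$ one gets $|S^{*}|\ge k$; and if $S^{*}$ were not a clique, then $g(S^{*})\le(2-\lambda)|S^{*}|-n+2-\frac{4}{|S^{*}|}$, which together with $g(S^{*})\ge r$ forces $(2-\lambda)\,(|S^{*}|-k)\,|S^{*}|\ge 4$; but the left side equals $\frac{2}{n^2}(|S^{*}|-k)|S^{*}|\le\frac{2}{n^2}(n-1)n<2$, a contradiction. Hence $S^{*}$ is a clique of size at least $k$, and any $k$ of its vertices form a $k$-clique in $G$.

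The one delicate point, and the step I expect to require the most care, is the tuning of $\bm{\lambda}$: it must be uniform and strictly below $2$, so that among cliques $g$ increases with size and the threshold $r$ is met exactly at size $k$, yet close enough to $2$ (namely $\lambda=2-\Theta(1/n^2)$) that the $-4/|S|$ deficiency incurred by a non-clique overwhelms the at most $(2-\lambda)(n-1)n$ advantage a larger non-clique could have over a $k$-clique; this is precisely where the inequality $\frac{2}{n^2}(n-1)n<4$ is invoked. The remaining ingredients — the regularity identity, the two density bounds, and the arithmetic evaluating $g$ on a clique — are all routine.
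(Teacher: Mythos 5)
Your reduction is correct, and it starts from the same place as the paper's: both reduce from \Clique, keep the graph unchanged, and use a uniform weight vector, exploiting the identity $\frac{2|E(S)|-|E(S,V\setminus S)|}{|S|}=\frac{4|E(S)|}{|S|}-(n-4)$ on $(n-4)$-regular graphs. The genuine difference is in how the ``no $k$-clique'' direction is handled. The paper sets $\lambda_v=\frac{2(k-1)}{k}$ and invokes Tur\'an's theorem: if $G$ has no $k$-clique, then every induced subgraph $G[S]$ satisfies $|E(S)|<\frac{(k-1)|S|^2}{2k}$, so \emph{every} $S$ falls strictly below the threshold $-(n-4)$, which is exactly the value attained by a $k$-clique. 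You instead set $\lambda_v=2-\frac{2}{n^2}$ (independent of $k$; only the threshold $r$ carries $k$) and reason about the \emph{maximizer} $S^*$: since $d(S)\le\frac{|S|-1}{2}$ with equality only for cliques, and a non-clique loses at least $\frac{1}{|S|}$ in density (hence $\frac{4}{|S^*|}$ in the objective), the $\Theta(1/n^2)$ slack in $2-\lambda$ is too small for a larger non-clique to compensate, forcing $S^*$ to be a clique of size at least $k$. Your arithmetic checks out, including the key inequality $\frac{2}{n^2}(|S^*|-k)|S^*|\le\frac{2(n-1)}{n}<4$, the positivity $\lambda_v>0$ (since $n\ge 4$ for an $(n-4)$-regular graph), and the degenerate cases $k=1$ and $k>n$. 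What each approach buys: the paper's argument is shorter once Tur\'an's theorem is cited; yours is entirely elementary and self-contained, at the cost of a more delicate tuning of $\lambda$ against the threshold. Both are valid polynomial-time reductions establishing the theorem.
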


We again introduce the decision version of the problem: 
\begin{problem}[\AP]
Given an undirected graph $G=(V,E)$, $\bm{\lambda}=(\lambda_v)_{v\in V}\in \mathbb{R}^V$, and a real number $r$, 
does there exist a vertex subset $S\subseteq V$ that satisfies $\frac{2|E(S)|-|E(S,V\setminus S)|}{|S|}-\sum_{v\in S}\lambda_v\geq r$?
\end{problem}

To prove Theorem~\ref{thm:main_AP}, it suffices to show the NP-completeness of \AP \ for $(n-4)$-regular graphs and $\bm{\lambda}>\bm{0}$. 
For a given vertex subset $S\subseteq V$, $\bm{\lambda}\in \mathbb{R}^V$, and a real number $r$, 
we can check if $\frac{2|E(S)|-|E(S,V\setminus S)|}{|S|}-\sum_{v\in S}\lambda_v\geq r$ in polynomial time; 
thus, \AP \ is in the class NP. 

To show the NP-completeness, we construct a polynomial-time reduction from the decision version of the maximum clique problem on $(n-4)$-regular graphs, which is known to be NP-hard~\cite{Aggarwal+98,Garey_Johnson_79}. 
Let $G=(V,E)$ be an undirected graph. A vertex subset $S\subseteq V$ is called a clique if every pair of vertices in the subset has an edge in the graph. 
Then we can introduce the problem: 
\begin{problem}[\Clique]
Given an $(n-4)$-regular graph $G=(V,E)$ and a positive integer $k\in \mathbb{Z}_{>0}$, 
does there exist a clique $S\subseteq V$ of size $k$?
\end{problem}

The following lemma guarantees the existence of the desired reduction: 
\begin{lemma}\label{lem:reduction_to_AP}
There exists a polynomial-time reduction from \Clique \ to \AP \ for $(n-4)$-regular graphs and $\bm{\lambda}>\bm{0}$. 
\end{lemma}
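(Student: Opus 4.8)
The plan is to use a \emph{graph-preserving} reduction: given an instance $(G,k)$ of \Clique{} with $G$ an $(n-4)$-regular graph on $n$ vertices, output the triple $(G,\bm{\lambda},r)$ in which $G$ is left unchanged, $\bm{\lambda}=\lambda\bm{1}$ is a \emph{constant} positive vector, and $r$ is a threshold to be tuned. The structural input is the regular-graph identity behind Lemma~\ref{lem:regular}: since $\sum_{v\in S}\deg(v)=(n-4)|S|=2|E(S)|+|E(S,V\setminus S)|$, the AP objective (in maximization form) with $\bm{\lambda}=\lambda\bm{1}$ becomes
\begin{align*}
g(S)=\frac{2|E(S)|-|E(S,V\setminus S)|}{|S|}-\lambda|S|=\frac{4|E(S)|}{|S|}-(n-4)-\lambda|S|\qquad(S\subseteq V).
\end{align*}
Hence on a clique of size $s$ this equals exactly $(2-\lambda)s+2-n$, while on a non-clique $S$ of size $s$ we have $|E(S)|\le\binom{s}{2}-1$, so $g(S)\le(2-\lambda)s+2-n-\tfrac{4}{s}$.

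Next I would set (assuming, as we may, that $1\le k\le n-3$, since a clique in an $(n-4)$-regular graph has at most $n-3$ vertices and the other ranges of $k$ as well as tiny $n$ are degenerate ``no'' instances handled separately)
\begin{align*}
\lambda:=2-\frac{2}{n(n-k)}\in(0,2),\qquad r:=(2-\lambda)k+2-n,
\end{align*}
so that $\bm{\lambda}>\bm{0}$, the output is an $(n-4)$-regular instance, and the map is polynomial-time. For soundness: if $G$ has a clique of size $\ge k$, it has one, $S^{\star}$, of size exactly $k$, and $g(S^{\star})=(2-\lambda)k+2-n=r$, so \AP{} is a ``yes'' instance. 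For completeness: suppose $g(S)\ge r$. If $S$ is a clique of size $s$, then $(2-\lambda)s+2-n\ge(2-\lambda)k+2-n$ together with $\lambda<2$ forces $s\ge k$, so $G$ has a clique of size $\ge k$. If $S$ is not a clique, then $g(S)\le(2-\lambda)s+2-n-\tfrac{4}{s}$, and it suffices to verify $(2-\lambda)(s-k)<\tfrac{4}{s}$ for every $1\le s\le n$; this is immediate for $s\le k$, and for $s>k$ it follows from the monotonicity of $s\mapsto s(s-k)$ on $s>k/2$, which gives $s(s-k)\le n(n-k)$ and hence $(2-\lambda)(s-k)=\tfrac{2(s-k)}{n(n-k)}\le\tfrac{2}{s}<\tfrac{4}{s}$. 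Thus $g(S)<r$, a contradiction, and \AP{} is a ``no'' instance.

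I expect the crux to be the calibration of $\lambda$: it has to sit in the narrow window $\bigl(2-\tfrac{4}{n(n-k)},\,2\bigr)$. The lower endpoint is precisely what stops a \emph{non-clique} set from making up for its missing edge by growing large; the strict upper bound $\lambda<2$ is what makes $g$ increasing in size along cliques, so that a feasible solution of value $\ge r$ certifies a clique of size at least $k$ (and not just some smaller clique). Once $\lambda$ is pinned down, the two case analyses are routine; the only remaining bookkeeping is the monotonicity claim for $s\mapsto s(s-k)$, the degenerate ranges of $k$ and $n$, and noting that \Clique{} on $(n-4)$-regular graphs is NP-hard because complementation turns it into the independent set problem on cubic graphs.
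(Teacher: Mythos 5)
Your reduction is correct, but it takes a genuinely different route from the paper's at the decisive step. Both proofs are graph-preserving: they keep $G$ unchanged, use a constant vector $\bm{\lambda}=\lambda\bm{1}>\bm{0}$, exploit $(n-4)$-regularity to rewrite the objective as $\frac{4|E(S)|}{|S|}-(n-4)-\lambda|S|$, and tune a single threshold. The paper sets $\lambda=\frac{2(k-1)}{k}$ and $r=-(n-4)$, and its ``no'' direction is a one-line consequence of Tur\'an's theorem: if $G$ has no $k$-clique then $|E(S)|<\frac{(k-1)|S|^2}{2k}$ for \emph{every} $S$, so $\frac{4|E(S)|}{|S|}<\lambda|S|$ and the objective is strictly below $-(n-4)$, with no case analysis on $|S|$ and no dependence of $\lambda$ on $n$. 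You instead push $\lambda$ up to $2-\frac{2}{n(n-k)}$ so that along cliques the objective is strictly increasing in size, and charge any non-clique the loss $\frac{4}{|S|}$ coming from one missing edge; your calibration $s(s-k)\le n(n-k)<2n(n-k)$ correctly shows this loss dominates the residual growth rate. This avoids Tur\'an entirely, at the cost of the narrow calibration window $\left(2-\frac{4}{n(n-k)},\,2\right)$, the split into $s\le k$ versus $s>k$, and the separate handling of degenerate $k>n-3$ (which you rightly note are trivial ``no'' instances). Both arguments are sound; the paper's buys a uniform bound over all $S$ from one extremal inequality, yours buys elementarity. Your closing observation on why \Clique{} is NP-hard (complementation to independent set on cubic graphs) is consistent with the sources the paper cites.
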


Before starting the proof of the lemma, we borrow the following useful fact from extremal graph theory: 
\begin{lemma}[(A corollary of) Tur\'an's theorem (Theorem~7.1.1 in Diestel~\cite{Diestel05})]\label{lem:Turan}
Let $G=(V,E)$ be an undirected graph and $k$ a positive integer. 
If $G$ does not contain a clique of size $k$, then $|E|\leq \left(1-\frac{1}{k-1}\right)\frac{|V|^2}{2}$ holds. 
\end{lemma}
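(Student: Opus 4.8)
The plan is to prove the equivalent form of Tur\'an's theorem with $r := k-1$ (so we may assume $k \ge 2$, whence $r \ge 1$ and the bound is well defined): every graph $G$ on $n = |V|$ vertices that contains no $K_{r+1}$ (equivalently, no clique of size $k$) satisfies $|E| \le (1-\tfrac{1}{r})\tfrac{n^2}{2}$. Since we only need an upper bound valid for all such $G$, it suffices to establish it for a graph $G$ that maximizes the number of edges among all $K_{r+1}$-free graphs on the vertex set $V$; every other $K_{r+1}$-free graph then has at most as many edges. The strategy is the classical Zykov symmetrization: first I would show that any such extremal $G$ must be complete multipartite, and then bound the edges of a complete multipartite graph by convexity.

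First I would show that in an extremal $G$ the relation ``$u$ and $v$ are non-adjacent'' is transitive, so that $G$ is a disjoint union of independent sets joined completely, i.e.\ complete multipartite. Suppose not: there are vertices $u,v,w$ with $uv, vw \notin E$ but $uw \in E$. The key tool is a cloning operation: replacing a vertex $x$ by a copy of a vertex $y$ (delete all edges at $x$ and instead join $x$ to every neighbor of $y$) never creates a $K_{r+1}$, because $x$ and $y$ become non-adjacent twins, so any clique contains at most one of them and can be transported to a clique of the same size in the original graph. If $d(v) < d(u)$ (or symmetrically $d(v) < d(w)$), cloning $u$ onto $v$ keeps $G$ free of $K_{r+1}$ while changing the edge count by $d(u)-d(v) > 0$, contradicting maximality. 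Otherwise $d(v) \ge d(u)$ and $d(v) \ge d(w)$, and cloning $v$ onto both $u$ and $w$ again preserves $K_{r+1}$-freeness while changing the edge count by $2d(v) - d(u) - d(w) + 1 > 0$, once more contradicting maximality. Hence no such triple exists and $G$ is complete multipartite.

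It then remains to bound the edges of a complete multipartite graph that is $K_{r+1}$-free. A complete $s$-partite graph contains $K_s$ (pick one vertex per part), so $s \le r$, and with part sizes $n_1,\dots,n_s$ its number of edges equals $\binom{n}{2} - \sum_{i=1}^{s}\binom{n_i}{2}$. By Cauchy--Schwarz, $\sum_i n_i^2 \ge n^2/s \ge n^2/r$, so $\sum_i \binom{n_i}{2} \ge \tfrac12(\tfrac{n^2}{r} - n)$, and therefore the number of edges is at most $\frac{n(n-1)}{2} - \tfrac12\big(\tfrac{n^2}{r} - n\big) = (1-\tfrac{1}{r})\tfrac{n^2}{2} = (1-\tfrac{1}{k-1})\tfrac{n^2}{2}$, as required.

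I expect the main obstacle to be the symmetrization step: one must verify carefully that the cloning operation never introduces a clique of size $k$ (the twin argument) and that in each of the two cases the edge count strictly increases, so that a ``bad'' non-adjacent triple genuinely contradicts the extremality of $G$ and no iteration or termination argument is needed. Once $G$ is known to be complete multipartite, the concluding convexity estimate is routine and yields exactly the stated bound, with no need to track floors and ceilings of the balanced part sizes.
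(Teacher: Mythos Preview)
Your proof is correct: the Zykov symmetrization argument you outline is a standard and complete proof of Tur\'an's theorem, and both the cloning step (with the twin argument ruling out any new $K_{r+1}$) and the convexity bound on complete multipartite graphs are carried out accurately, including the edge-count change $2d(v)-d(u)-d(w)+1>0$ in the second case.

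The paper, however, does not prove this lemma at all. It is stated as a borrowed fact from extremal graph theory with a citation to Diestel's textbook (Theorem~7.1.1), and is then used as a black box in the reduction for Lemma~\ref{lem:reduction_to_AP}. So there is no ``paper's own proof'' to compare against; you have supplied a full argument where the authors simply invoke the classical result.
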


\begin{proof}[Proof of Lemma~\ref{lem:reduction_to_AP}]
Let $(G,k)$ be an instance of \Clique, where $G=(V,E)$ is an $(n-4)$-regular graph and $k$ is a positive integer. 
From this instance, we construct an instance $(G^*,\bm{\lambda^*},r^*)$ of \AP \ as follows: 
We just use the same graph, i.e., $G^*=G$, and set $\lambda^*_v=\frac{2(k-1)}{k}$ for every $v\in V$, and $r^*=-(n-4)$. 
To prove the theorem, it suffices to show that $G$ has a clique $S\subseteq V$ of size $k$ if and only if $G^*$ has a vertex subset $S\subseteq V$ that satisfies $\frac{2|E(S)|-|E(S,V\setminus S)|}{|S|}-\sum_{v\in S}\lambda^*_v \geq r^*$. 
Note that as $G$ is $(n-4)$-regular, we have $\frac{2|E(S)|-|E(S,V\setminus S)|}{|S|}=\frac{4|E(S)|}{|S|}-(n-4)$. 

Assume that $G$ has a clique $S\subseteq V$ of size $k$. 
Then we have 
\begin{align*}
\frac{4|E(S)|}{|S|}-(n-4)-\sum_{v\in S}\lambda^*_v=2(|S|-1)-(n-4)-\frac{2(k-1)}{k}|S|=r^*, 
\end{align*}
meaning that $S$ is a vertex subset that we desired. 

Assume that $G$ does not have a clique of size $k$. 
Then for any $S\subseteq V$, $G[S]$ does not contain a clique of size $k$. 
Thus, by Lemma~\ref{lem:Turan}, we have that for any $S\subseteq V$, 
\begin{align*}
|E(S)| \leq \left(1-\frac{1}{k-1}\right)\frac{|S|^2}{2}<\frac{(k-1)|S|^2}{2k}, 
\end{align*}
and hence
\begin{align*}
\frac{|E(S)|}{|S|}< \frac{(k-1)|S|}{2k}. 
\end{align*}
Therefore, we have 
\begin{align*}
\frac{4|E(S)|}{|S|}-(n-4)-\sum_{v\in S}\lambda^*_v< \frac{2(k-1)}{k}|S|-(n-4)-\frac{2(k-1)}{k}|S|=-(n-4), 
\end{align*}
which completes the proof. 
\end{proof}

\section{Conclusion}\label{sec:conclusion}
In this study, we have investigated modularity density maximization from both algorithmic and computational complexity aspects.
Specifically, we first accelerated column generation for the modularity density maximization problem. 
We first pointed out that the AP, i.e., the auxiliary problem appearing in column generation, can be viewed as a dense subgraph discovery problem.
Then we employed a well-known strategy for dense subgraph discovery, called the greedy peeling, for approximately solving the AP. 
Moreover, we reformulate the AP to a sequence of $0$--$1$ linear programming problems,
enabling us to compute its optimal value more efficiently and to get more diverse columns.
Computational experiments using a variety of real-world networks demonstrate the effectiveness of our proposed algorithm.
In particular, our algorithm obtains an optimal solution within only two hours
to an instance with more than 100 vertices that cannot be solved by the state-of-the-art methods in 24 hours.
Finally, we showed the NP-hardness of a slight variant of the modularity density maximization problem,
where the output partition has to have two or more clusters, as well as showing the NP-hardness of the AP. 

There are several directions for future research. 
The most interesting one is to design a more scalable exact or column generation algorithm 
for the modularity density maximization problem. 
Although our proposed algorithm is scalable than the state-of-the-art methods, 
it is still difficult to handle large-scale graphs. 
Another direction is to apply the techniques similar to that used in the present paper to other important optimization problems. 
Our observation suggests that when designing a column generation algorithm 
for an optimization problem based on the set partitioning formulation, we can see the nature of dense subgraph discovery. 
Finally investigating the computational complexity of the modularity density maximization problem rather than its variants is a trivial future work. 
Is it possible to prove the NP-hardness of the modularity density maximization problem by constructing a more sophisticated polynomial-time reduction from some problem?

\section*{Acknowledgments}
The authors would like to thank Rafael de Santiago and Lu\'is C. Lamb for sharing the code of their algorithm, 
which makes it easier to implement their algorithm in our experimental environment. 
The authors would also like to thank Keisuke Sato and Yoichi Izunaga for sharing the code of their algorithm, 
which is directly used in our experiments. 
A.M. is supported by Grant-in-Aid for Early-Career Scientists (No.~19K20218). 
A.T. is supported by Grant-in-Aid for Scientific Research (B) (No.~19H04069).

\bibliographystyle{abbrv}
\bibliography{main}

\end{document}